\documentclass[aps,prl,showpacs,notitlepage,twocolumn,superscriptaddress,nofootinbib,preprintnumbers,floatfix]{revtex4-2}

\usepackage{graphicx}% Include figure files
\usepackage{dcolumn}% Align table columns on decimal point
\usepackage{bm}% bold math
\usepackage[colorlinks=true,linkcolor=blue,citecolor=blue,urlcolor=blue]{hyperref}

\usepackage{bbold} % for identity 1
\usepackage{tcolorbox}
\usepackage{algorithm}
\usepackage{algpseudocode}
\usepackage{amsmath}
\usepackage{amsfonts}
\usepackage{amsthm}
\usepackage{amssymb}
\usepackage{mathtools}
\usepackage{braket}
\theoremstyle{plain}
\newtheorem{theorem}{Theorem}%[section]

\newtheorem{lemma}[theorem]{Lemma}

\theoremstyle{definition}

\newtheorem{assumption}[theorem]{Assumption}

\theoremstyle{remark}

\newtheorem*{theorem*}{Theorem}
\newtheorem*{lemma*}{Lemma}
\newtheorem*{definition*}{Definition}
\newtheorem*{corollary*}{Corollary}
\newtheorem*{remark*}{Remark}

\newcommand{\NN}{\mathbb{N}}
\newcommand{\cB}{\mathcal{B}}
\newcommand{\cH}{\mathcal{H}}
\newcommand{\cL}{\mathcal{L}}

\newcommand{\id}{\mathbb{1}}
\newcommand{\norm}[1]{\left\lVert #1 \right\rVert}

\begin{document}

%\title{Scaling Theory for Learning Low-Energy Quantum States \\ from Wavefunction Subspaces}
\title{Scaling Theory for Learning Low-Energy Quantum Subspaces}
\author{Guijing Duan}
\thanks{These authors contributed equally to this study.}
\affiliation{Department of Physics, Tsinghua University, Beijing 100084, China}

\author{Chen Mo}
\thanks{These authors contributed equally to this study.}
\affiliation{Department of Physics, Tsinghua University, Beijing 100084, China}

\author{Di Luo}
\email{diluo@tsinghua.edu.cn}
\affiliation{Department of Physics, Tsinghua University, Beijing 100084, China}
\affiliation{Institute of Advanced Study, Tsinghua University, Beijing 100084, China}

\begin{abstract}
Phase diagrams and energy surfaces require low-energy states across a family of Hamiltonians, but solving each parameter point independently is prohibitively expensive. An important question is not how a single labeled eigenstate behaves under parameter variation, but how much information about the low-energy sector is contained in a small set of reference wavefunctions. We show that the physically relevant object is the isolated low-energy subspace itself: it remains well defined through degeneracies and level crossings, and it can be learned efficiently from nearby sampled states. Working in this subspace, we prove that a sampling pattern cancelling the first $q$ nonconstant response orders yields an energy-density error bounded by $d^{2(q+1)}$ for every retained level, where $d$ is the parameter-space sampling distance. For a gapped local ground state on $n$ sites, locality sharpens this to $d^2(nd^2)^q$, identifying $nd^2$ as the natural scaling variable governing the effectiveness of higher-order information from sampled wavefunctions. In a local analytic regime this gives a sampling cost $K=\mathcal{O}([\log(1/\varepsilon)]^D)$ at fixed parameter dimension $D$. Numerical results on Heisenberg and transverse-field Ising chains confirm the distance scaling law and the $nd^2$ collapse. Beyond characterizing learnability, the scaling behavior further provides a finite-size probe of phase transitions near criticality.

\end{abstract}
\maketitle

\emph{Introduction.---}
Many problems in quantum many-body physics require solving not one Hamiltonian, but a family of closely related Hamiltonians.
Phase diagrams, excitation spectra, and molecular energy surfaces all require low-energy eigenstates as couplings, fields, or geometries vary~\cite{Ors2014,Haegeman2012,Tilly2022,McArdle2020}.
Solving each point independently is often the dominant cost.
The natural physical hope is that nearby Hamiltonians share enough low-energy structure that wavefunctions computed at a few representative points can predict the rest.
This raises a basic question: how much low-energy information about a Hamiltonian family is carried by a small set of reference wavefunctions?

This expectation underlies two rapidly developing directions.
Foundation neural quantum states aim to represent the ground states of many Hamiltonians with a single model, enabling transfer and fine-tuning across coupling regimes~\cite{Zhang2023,zaklama_arxiv2512.11962,Rende2025,Scherbela2022,gao_arxiv2110.05064,Fuarxiv.2603.02346}.
Reduced-basis methods, including eigenvector continuation, take a complementary route: they use already computed wavefunctions as a problem-adapted variational basis by projecting $H(\bm{x})$ onto their span~\cite{MejutoZaera2023,frame2018eigenvector, cances2002towards, Knig2020,  Yoshida2022,Ohlberger_arxiv1511.02021}.
The projected ground-state energy is no higher than the energy of any individual input state, independently of the solver used to generate those states.
The success of these methods across nuclear physics, quantum chemistry, and condensed matter suggests that low-energy wavefunctions often occupy an effectively low-dimensional manifold inside Hilbert space~\cite{Yapa2023,Christiansen2025,Bonilla2022,Anderson2022,Cheng2025, Prelovek2018,Franzke2024,Francis_arxiv2209.10571, Rath2025,Demol2020,Christiansen2025,Francis_arxiv.2209.10571,Agrawal2025}.
Recent foundational neural effective Hamiltonian constructions further combine the foundational neural quantum state viewpoint with this perspective, using neural quantum states sampled at selected couplings to accelerate scans of observables and phase boundaries~\cite{ZhangFNEH2026}. However, what has been missing from these advances is a scaling theory for learning the low-energy wavefunction subspace: how the prediction error depends on the sampling distance, the system size, and the number of sampled wavefunctions.

Such a theory must confront two many-body difficulties.
First, near degeneracies or level crossings, the identity of a labeled eigenstate can change abruptly.
Continuing a single eigenvector is then not a stable physical object, whereas the subspace spanned by all retained low-energy states remains well defined as long as it stays separated from higher levels.
Second, a local parameter change perturbs every site.
Thus a point that is close in parameter space can be far in Hilbert space, and the overlap between sampled and target ground states can degrade with the number of sites.
A useful controlled theory must therefore treat the low-energy sector as a whole and determine how accuracy depends simultaneously on sampling distance, the number of sampled points, and system size.

We develop such a scaling theory for learning low-energy quantum subspace from wavefunctions computed at nearby parameter points.
The construction combines sampled low-energy states into a variational subspace for the target Hamiltonian and treats the retained low-energy levels collectively, so it remains meaningful through degeneracies and level crossings.
The key mechanism is canceling response orders of the low-energy states under parameter variations.
A sampling pattern at parameter distance $d$ can cancel the first $q$ nonconstant response terms of the low-energy subspace.
The remaining leakage into high-energy states is then of order $d^{q+1}$, and because such leakage changes a variational energy only at second order, the energy-density error of every retained level is bounded by $d^{2(q+1)}$.
This is the scaling law for the isolated low-energy manifold: the exponent is fixed by the cancellation order and is unaffected by degeneracies or crossings inside the retained sector, while the prefactor contains the system-size dependence.
For a unique gapped ground state of a local system with $n$ sites, locality further sharpens this prefactor and gives the form $d^2(nd^2)^q$.
The combination $nd^2$ is then the physical small parameter: in a gapped local phase the fidelity susceptibility is extensive, so the infidelity between two ground states separated by distance $d$ is of order $nd^2$.
In a local analytic regime this scaling gives a sampled cost $K=\mathcal{O}([\log(1/\varepsilon)]^D)$ at fixed parameter dimension $D$.
Numerical results on gapped spin chains confirm the distance law and the $nd^2$ data collapse.
Finally, near a quantum critical point, where the relevant gap closes, the distance exponent departs from its gapped value; this departure provides a finite-size probe of the transition.

\emph{Low-energy subspace scaling law.---}
We first formulate the wavefunction-basis construction.
Consider a finite many-body system of size $n$ described by a Hamiltonian $H(\bm{x})$ that depends smoothly on $D$-dimension control parameters $\bm{x}=(x_1,\ldots,x_D)$.
We order its energy levels as $E_0(\bm{x})\le E_1(\bm{x})\le\cdots$ and focus on the subspace spanned by the lowest $M$ levels.
The only spectral assumption is an external gap: throughout a neighborhood $\|\bm{x}-\bm{x}_0\|\le d_0$ of the target point $\bm{x}_0$,
\begin{equation}
  E_M(\bm{x})-E_{M-1}(\bm{x})
  \ge \Delta_{\mathrm{high}}>0.
  \label{eq:external-gap}
\end{equation}
Crossings and degeneracies among $E_0(\bm{x}),\ldots,E_{M-1}(\bm{x})$ are allowed.
Individual eigenvectors inside this retained sector may rotate, but the whole low-energy subspace varies smoothly with $\bm{x}$~\cite{kato1966perturbation,Kato1995}.
We denote this low-energy subspace as $\mathcal{H}_{\bm{x}}$-subspace.

We sample the $\mathcal{H}$-subspace at neighboring points $\bm{x}_j=\bm{x}_0+d\bm{s}_j$, $j=1,\ldots,K$.
The dimensionless vectors $\bm{s}_j$ $(||\bm{s}||\geq1)$ define the sampling pattern and $d$ sets its overall radius, as illustrated in Fig.~\ref{fig:d-scaling}(a).
At each sampled parameter we choose any orthonormal basis $\{\ket{\psi_a(\bm{x}_j)}\}_{a=0}^{M-1}$ from the $\mathcal{H}_{\bm{x}}$-subspace and form the variational subspace
\begin{equation}
  \mathcal{B}_{M,K}(d)
  =
  \operatorname{span}
  \left\{
    \ket{\psi_a(\bm{x}_j)}
  \right\}_{\substack{
    a=0,\ldots,M-1\\
    j=1,\ldots,K
  }} .
  \label{eq:low-energy-reference-space}
\end{equation}
Because the $\mathcal{H}_{\bm{x}}$-subspace is included at each sampled parameter,  $\mathcal{B}_{M,K}(d)$ is unchanged by arbitrary basis rotations within this subspace.
The method therefore depends on the physical low-energy sector, not on a convention for labeling individual eigenstates. 

$\mathcal{B}_{M,K}(d)$ encodes the local response of the target subspace to parameter changes.
For $q\leq K$, $\mathcal{B}_{M,K}(d)$ contains the $q$-order response information of the target $\mathcal{H}_{\bm{x_0}}$-subspace because $\mathcal{B}_{M,K}(d)$ is able to cancel the first $q$ nonconstant response orders.
For example, $q=0$ uses the zeroth-order overlap with a nearby state, $q=1$ cancels the linear response, and $q=2$ also cancels the quadratic response.
After this cancellation, an $M$-dimensional subspace inside $\mathcal{B}_{M,K}(d)$ approximates the target $\mathcal{H}_{\bm{x_0}}$-subspace with error $\mathcal{O}(d^{q+1})$.
The explicit moment conditions on the sampling weights are given in the Supplemental Material~\cite{SM}.
In the following, we use min--max principle to turn this subspace accuracy into energy bounds.
\begin{figure}[t]
  \includegraphics[width=\columnwidth]{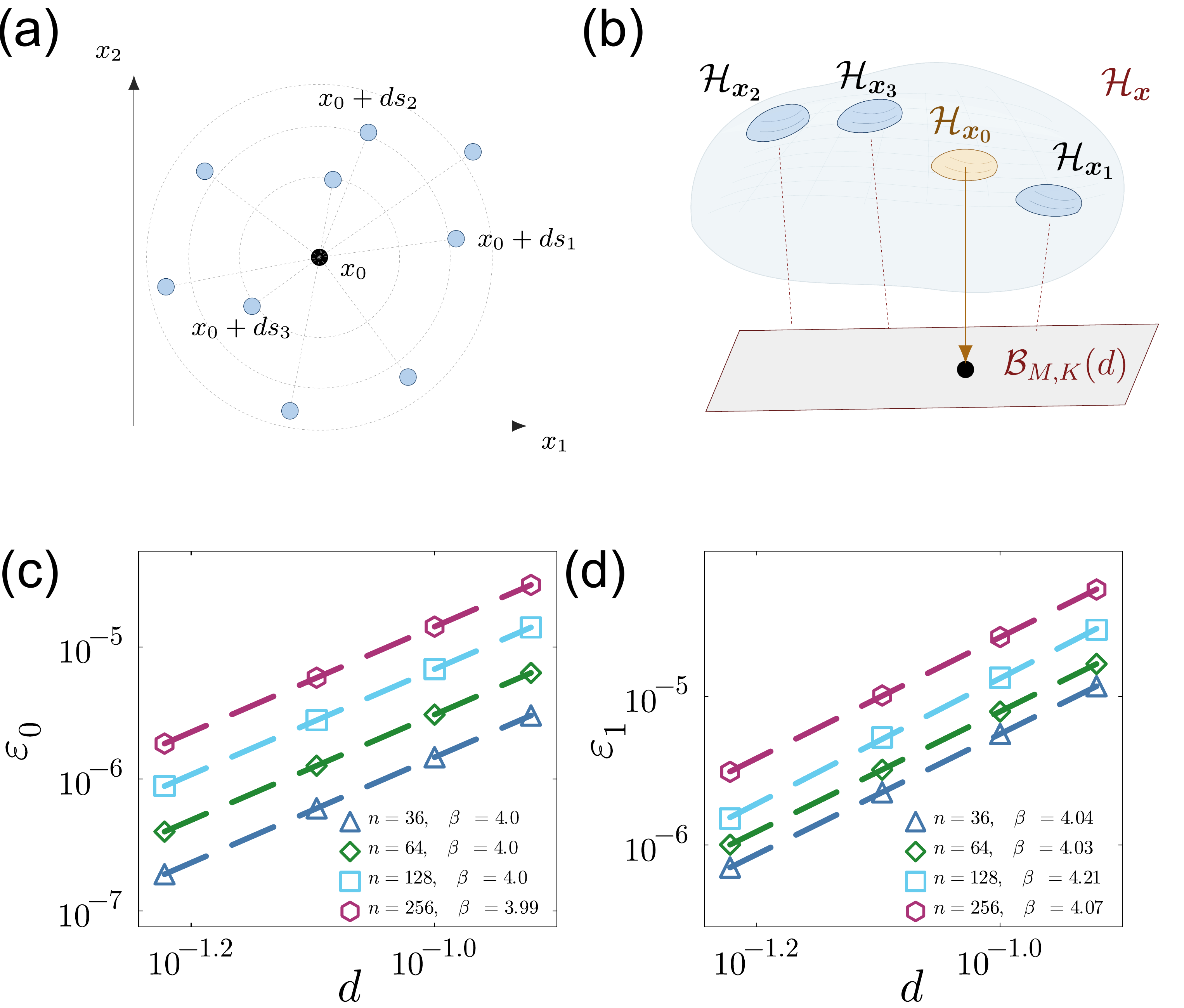}
  \caption{
    Geometry of the wavefunction basis construction and its distance scaling.
    (a) Sample $\mathcal{H}_{\bm{x_j}}$-subspaces at sampling
    distance $d$ around the target $\bm{x}_0$, which is itself not sampled.
    (b) The sampled $\mathcal{H}_{\bm{x_j}}$-subspaces  span 
    $\mathcal{B}_{M,K}(d)$, which approximates the target $\mathcal{H}_{\bm{x_0}}$-subspace 
    to $\mathcal{O}(d^{q+1})$.
    (c), (d) Energy-density errors $\varepsilon_a\equiv[\widetilde E_a(d)-E_a(\bm{x}_0)]/n$
    versus the sampling distance $d$ at order $q=1$, for the ground state ($a=0$) and the first excited singlet ($a=1$) of the dimerized Heisenberg chain, with the construction applied in the two-dimensional low-energy subspace ($M=2$)
    at several system sizes $n$. Dashed lines are power-law fits, with exponents close to the
    predicted value $2(q+1)=4$.
  }
  \label{fig:d-scaling}
\end{figure}

The central result is the distance scaling law. For given $\bm{x}$, $M$ and $q$ response order, there exists $K$ such that $\mathcal{B}_{M,K}(d)$ provides the following properties. Project $H(x_0)$ in $\mathcal{B}_{M,K}(d)$ and diagonalize it to obtain the first $M$ eigenvalues $\widetilde E_a(d)$.
%Let ${\widetilde E_a(d)}$ be the first $M$ Ritz values of $H(\bm{x}_0)$ in $\mathcal{B}_{M,K}(d)$.
If $H(\bm{x})$ is smooth near $\bm{x}_0$, the external gap in Eq.~\eqref{eq:external-gap} remains open, and the sampling pattern cancels the first $q$ response orders of the $\mathcal{H}_{\bm{x_0}}$-subspace, then for sufficiently small $d$,
\begin{equation}
  0\le
  \frac{\widetilde E_a(d)-E_a(\bm{x}_0)}{n}
  \le
  C_{M,q}^{(n)} d^{2(q+1)},
  \quad
  a=0, \ldots,M-1,
  \label{eq:low-energy-bound}
\end{equation}
where $C_{M,q}^{(n)}$ is independent of $d$.
For a unique gapped ground state in a uniformly gapped local phase, locality further constrains the finite-size prefactor:
\begin{equation}
  0\le
  \frac{\widetilde E_0(d)-E_0(\bm{x}_0)}{n}
  \le
  A_q n^q d^{2(q+1)}
  =
  A_q d^2 (nd^2)^q,
  \label{eq:ground-state-finite-size-bound}
\end{equation}
with $A_q$ independent of $n$ and $d$.

The proof is given in the SM~\cite{SM}. The exponent comes from a simple mechanism.
After cancellation, the approximate low-energy states contain only an $\mathcal{O}(d^{q+1})$ admixture from high-energy states.
This admixture changes the variational energy only at second order, leading to an energy-density error $\mathcal{O}(d^{2(q+1)})$.
The exponent is fixed by the cancellation order alone, while all dependence on system size enters through the prefactor $C_{M,q}^{(n)}$, whose scaling is analyzed next.

Unlike the exponent, the prefactor depends on the system, and how fast it grows decides whether a fixed sampled basis stays accurate at large $n$.
The gap alone does not control this.
For a unique gapped ground state, however, locality constrains it.
%Along a uniformly gapped path, parameter variations are %implemented by a
%quasi-local spectral flow with rapidly decaying spatial tails
%\cite{Hastings2006,bachmann2012automorphic}.
%Combining this quasi-local structure with exponential %clustering~\cite{hastings2005quasiadiabatic}, we show in the %SM~\cite{SM} that each connected
%block of local responses contributes at most one overall volume %factor.
Along a path that remains uniformly gapped, changes in the ground state are generated by a quasi-local spectral flow with rapidly decaying spatial tails~\cite{hastings2005quasiadiabatic,bachmann2012automorphic}.
Together with exponential clustering~\cite{Hastings2006}, this quasi-locality ensures that each connected cluster of response operators contributes at most one factor of the system volume, as shown in the SM~\cite{SM}. 
For the leading residual identified above, the
energy-error expansion contains at most $q+1$ independent connected
clusters. The total-energy error therefore grows at most as
$n^{q+1}d^{2(q+1)}$, and division by the system size gives the $n^q$
prefactor in Eq.~\eqref{eq:ground-state-finite-size-bound}.

According to Eq.~\eqref{eq:ground-state-finite-size-bound}, at $q=0$ a single reference state already gives an error density
$\varepsilon_0\lesssim d^2$, independent of system size.
Each additional order gains a factor $d^2$ in the distance convergence
at the cost of a factor $n$ in the prefactor, so raising $q$
trades a faster approach in $d$ against a stronger growth in $n$.
Higher-order cancellation therefore pays off only when $nd^2\lesssim 1$,
and $nd^2$ sets the crossover scale of the local parameter response.
This combination has a physical interpretation.
It is consistent with the result that the fidelity susceptibility is extensive~\cite{you2007fidelity,campos2007quantum} and the infidelity between two ground states separated by a parameter distance $d$ is of order $nd^2$ for a gapped local ground state.
Thus, nearby points in parameter space need not correspond to nearby many-body states, and $nd^2$ measures their effective separation in Hilbert space.

The bound controls the  $\mathcal{H}_{\bm{x}}$-subspace as a whole.
All retained levels, including the ground state and low-lying excitations,
are bounded with the same exponent $d^{2(q+1)}$,
even when individual eigenvectors are ill-conditioned by near-degeneracies or internal level crossings. This is quite different from the eigenstate continuation, which attempts to 
track individual eigenstate as parameter vary and therefor 
requires a meaningful state-by-state correspondence.
This yields a principled criterion for choosing $M$.
$\mathcal{B}_{M,K}(d)$ should include every level that becomes nearly degenerate
or crosses within the parameter region of interest.

We test these distance scaling laws on a spin-$1/2$ Heisenberg chain of length $n$
with open boundaries and period-four couplings,
\begin{equation}
  H=\sum_{i=1}^{n-1}\mathcal J_i\,\mathbf S_i\cdot\mathbf S_{i+1},
  \qquad \mathcal J_{4r+b}=J_b.
\end{equation}

We set $J_1=J_3=1$ and vary $\bm x=(J_2,J_4)$ around $\bm x_0=(0.5,0.4)$
in the gapped dimerized phase, using reference states from SU(2)-symmetric DMRG~\cite{Schollwck2005,McCulloch2002}.
We apply the construction within the $S=0$ sector and retain the two
lowest singlets as the $\mathcal{H}_{\bm{x}}$-subspace ($M=2$), with the external
gap set by the third singlet level.
The detailed numerical procedure is given in the SM~\cite{SM}.
As shown in Fig.~\ref{fig:d-scaling}(c, d), the energy-density errors
$\varepsilon_a\equiv[\widetilde E_a(d)-E_a(\bm{x}_0)]/n$ for both the ground state and the first excited singlet follow the predicted
$d^{2(q+1)}$ scaling, with fitted exponents close to $4$ for $q=1$ at all system sizes.
The ground state and the excitation obey the same exponent, confirming that the cancellation acts on the $\mathcal{H}_{\bm{x}}$-subspace as a whole rather than on a single eigenvector.

\begin{figure}[t]
  \centering
  \includegraphics[width=\columnwidth]{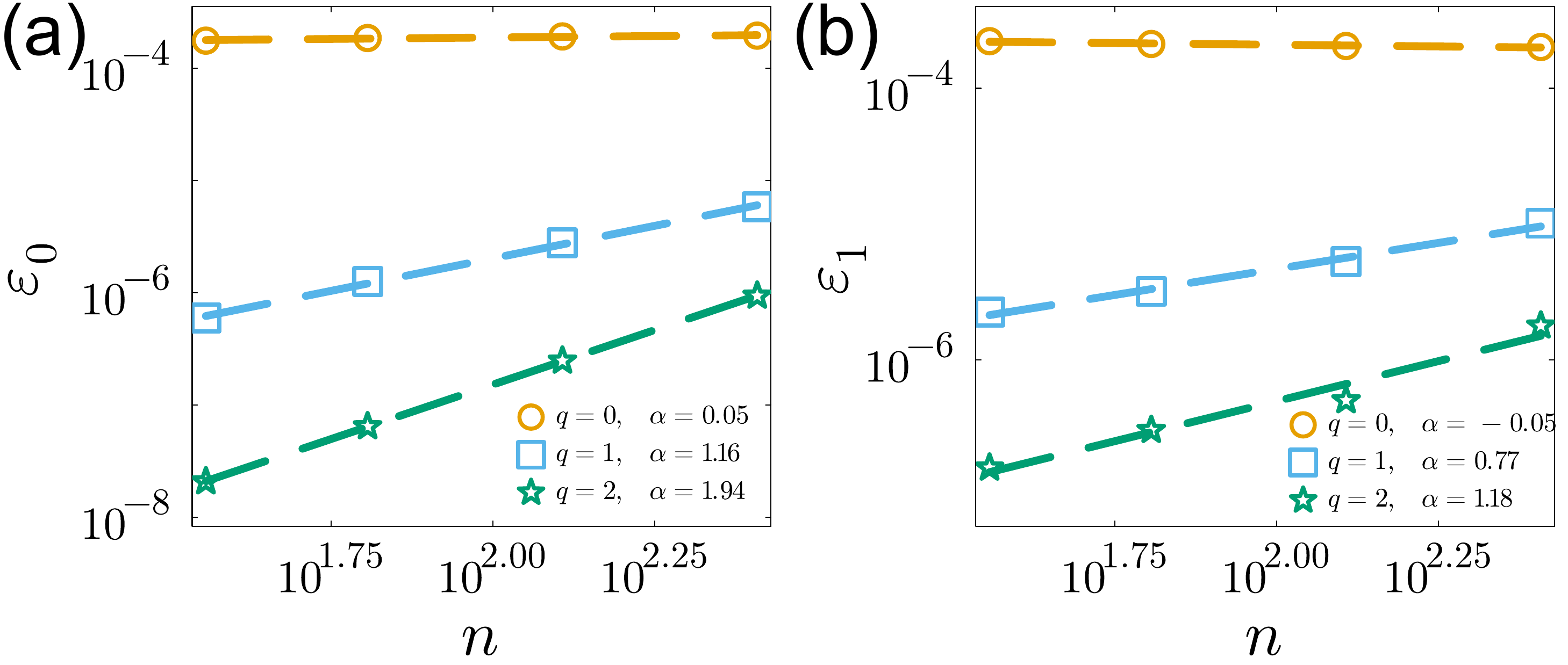}
  \caption{
    Finite-size scaling of the energy density errors.
    (a) Ground state error from $\mathcal{B}_{1,K}(d)$, testing the
    predicted $n^q$ prefactor.
    (b) The first excited singlet error from $\mathcal{B}_{2,K}(d)$.
    Dashed lines are fits to $\varepsilon_a\propto n^\alpha$.
  }
  \label{fig:nq-scaling}
\end{figure}

As presented in Fig.~\ref{fig:nq-scaling}, the ground-state error follows the predicted $n^q$ scaling, with fitted exponents $0.05, 1.16, 1.94$
for $q=0,1,2$. In contrast, the first excited singlet error grows far more slowly, reaching only $1.18$ at $q=2$.
We attribute the contrast to the bulk character of the ground-state response, which accumulates over independent regions, whereas the response to a finite-energy excitation can remain concentrated around the dressed excitation.
%A finite-energy excitation need not accumulate independent contributions
%throughout the bulk. Its response may remain concentrated around the %dressed
%excitation rather than being distributed over the entire system.}
Its finite-size growth is then suppressed, while the distance exponent $d^{2(q+1)}$ stays fixed by the cancellation order.

When the rescaled ground state error $\varepsilon_0/d^2$ is plotted against $nd^2$, data
from different sizes and sampling distances collapse onto one curve,
following $\varepsilon_0/d^2 \sim (nd^2)^q$ until the predicted crossover near $nd^2\sim \mathcal{O}(1)$,
where the growth slows as shown in Fig.~\ref{fig:collapse}(a).
Beyond this point the sampled ground states are no longer globally close to the
target state and can not control the error.
The energy-density error nevertheless saturates, since $\mathcal{B}_{M,K}(d)$ contains the sampled states themselves and any one of them already gives an error density of order $d^2$ at any system size.

\begin{figure}[t]
  \centering
  \includegraphics[width=\columnwidth]{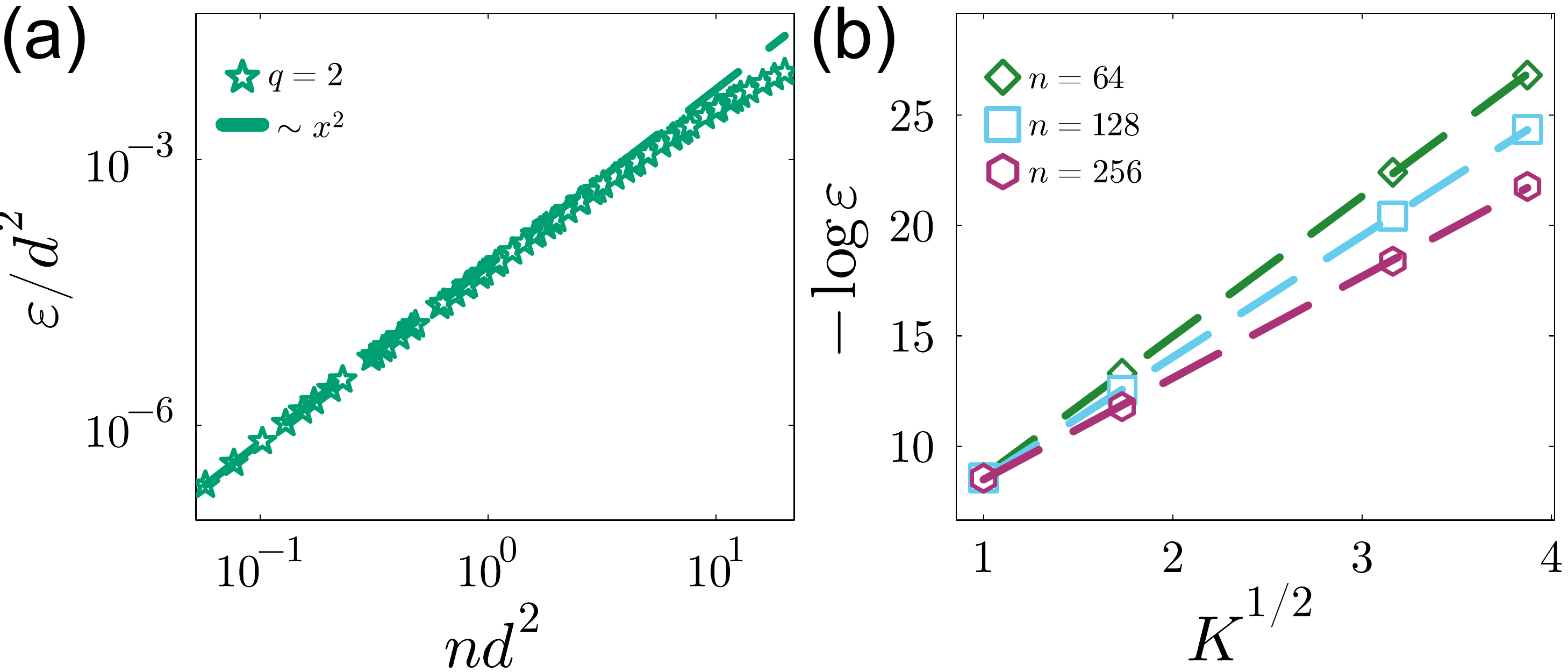}
  \caption{
    Single variable data collapse and sample complexity.
    (a) Rescaled ground-state error $\varepsilon_0/d^2$ versus $nd^2$ for $q=2$,
    combining different $n$ and $d$; the dashed line is the power law $(nd^2)^q$.
    (b) $-\log\varepsilon_0$ versus $\sqrt{K}$ for $n=64,128,256$ at fixed $d$.
  }
  \label{fig:collapse}
\end{figure}

\emph{Sample complexity.---}
Raising the response order improves the distance scaling, but each order costs more sampled parameters.
How much accuracy a given number of samples buys therefore depends on how the coefficient $A_q$ in Eq.~\eqref{eq:ground-state-finite-size-bound} behaves as $q$ grows. We note that increasing $q$ leads to the increasing $K$.
Locality has already accounted for the explicit growth with system size, so what remains in $A_q$ carries no volume factor.
If the response stays analytic in a fixed window around the target, each additional order costs only a local constant.
Together with a stable sequence of sampling patterns, formalized in the SM~\cite{SM}, this gives
\begin{equation}
  \varepsilon_0
  \lesssim
  d^2
  \exp\left[
    -c_D
    |\log(\widetilde C n d^2)|
    K^{1/D}
    \right],
  \label{eq:stretched-exponential-K}
\end{equation}
whenever $\widetilde Cnd^2<1$.
Here $\widetilde C$ is independent of $n$, $d$, and $q$, while $c_D$ depends only on the parameter dimension $D$ and the sampling geometry.
Equivalently, at fixed $D$, $n$, and $d$, reaching accuracy $\varepsilon$ requires
\begin{equation}
  K=\mathcal{O}\!\left([\log(1/\varepsilon)]^D\right)
\end{equation}
sampled parameters.
The exponent $1/D$ in Eq.~\eqref{eq:stretched-exponential-K} arises because cancelling one more response order must be arranged in all $D$ parameter directions at once.
When $\widetilde Cnd^2$ is small, increasing $K$ efficiently raises the response order.
As $\widetilde Cnd^2$ approaches one, the rate in Eq.~\eqref{eq:stretched-exponential-K} vanishes and additional samples give diminishing returns.
Beyond this point the achievable accuracy is limited by the finite overlap between the sampled states and the target state rather than by the response order $q$.
The response order $q$ is not tied to a particular sampling protocol.
More generally, different local sampling geometries can realize the same response order, while their conditioning determines the numerical stability of the construction.
The SM~\cite{SM} compares random sampled parameters with the structured patterns used here.

As a direct test, we vary the number of sampled parameter points.
In our two-dimensional parameter space $(J_2,J_4)$, the number of sampled points scales as $K\sim q^2$, so Eq.~\eqref{eq:stretched-exponential-K} predicts $-\log\varepsilon_0\propto \sqrt K$.
The data in Fig.~\ref{fig:collapse}(b) are consistent with this stretched-exponential improvement.
The curves are approximately linear in $\sqrt{K}$ over the accessible range, with slopes that decrease as the system grows.

\emph{Critical crossover of wavefunction subspace scaling.---} Finally, we examine the construction as the target parameter approaches a continuous transition, where the gapped scaling is expected to degrade with increasing system size.
We characterize this crossover through the convergence exponent.
The bound above relies on an external gap between the retained subspace and the rest of the spectrum.
At a quantum critical point this gap closes in the thermodynamic limit, so no uniform gapped bound can hold across the critical region.
For any finite system, however, the $\mathcal{H}_{\bm{x}}$-subspace remains smooth, and the projected Hamiltonian on $\mathcal{B}_{M,K}$ reflects the finite-size critical response.

To probe the scaling near the critical point, we take the open transverse-field
Ising chain, $H=-\sum_{i=1}^{n-1}\sigma_i^z\sigma_{i+1}^z-g\sum_{i=1}^{n}\sigma_i^x$,
whose transition lies at $g_c=1$ in the thermodynamic limit~\cite{pfeuty1970one}. We retain the two lowest states,
the nearly degenerate Ising doublet in the ferromagnetic phase, as
the low-energy sector, with the protecting gap set by the third level.
We then extract the effective convergence exponent $\beta$ at fixed order $q=1$
as $g$ is tuned through $g_c$. The numerical protocol and fitting procedure are detailed in the
SM~\cite{SM}.

Away from the transition, the effective exponent $\beta$ is close to
the gapped prediction $\beta=2(q+1)=4$. As $g$ approaches $g_c$, however, $\beta$ decreases and develops a pronounced
dip shown in Fig.~\ref{fig:critical}, reflecting a crossover out of the gapped scaling
regime. Near the critical point $g_c$, the dip deepens and sharpens as the system grows, which appears related to
finite-size critical width $|g-g_c|\sim 1/n$.
The convergence exponent thus serves as a finite-size diagnostic of the
critical response of the $\mathcal{H}_{\bm{x}}$-subspace.

\begin{figure}[t]
  \centering
  \includegraphics[width=0.8\columnwidth]{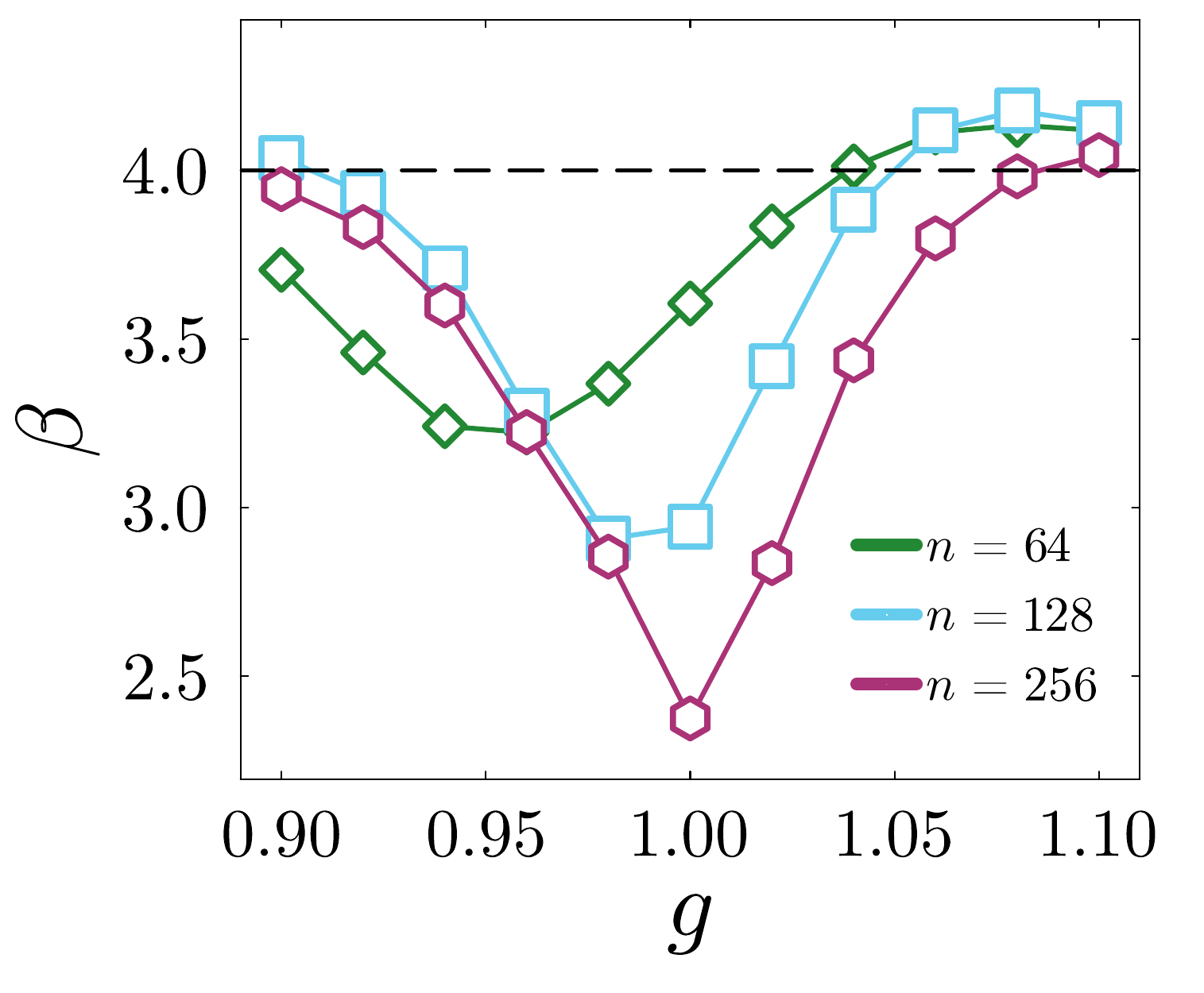}
  \caption{
    Critical crossover of the effective distance scaling exponent.
    The fitted exponent $\beta$ in $\varepsilon\sim d^{\beta}$ is shown as a function
    of the transverse field $g$ for the open transverse field Ising chain.
    The dashed line marks the gapped prediction $\beta=2(q+1)=4$ for $q=1$.
    As the system size increases, $\beta$ develops a deeper and sharper dip near
    $g_c=1$, signaling the crossover out of the gapped scaling regime.
  }
  \label{fig:critical}
\end{figure}

\emph{Conclusions and outlook.---}
We have established a scaling theory for wavefunction transfer based on a low-energy quantum subspace.
The construction guarantees an energy-density error of order $d^{2(q+1)}$, fixed by the cancellation order alone.
For a local gapped ground state, locality controls the finite-size prefactor and identifies $nd^2$ as the natural scaling variable.
When $nd^2\lesssim1$ the ground state is efficiently learnable, with a sampled cost polylogarithmic in the target accuracy in the local analytic regime.
Together, these results turn learning low-energy subspace from a practical heuristic into a quantitatively controlled framework.

Since the theory is formulated at the level of the low-energy subspace, its guarantees remain valid where single-state continuation becomes fragile, including near degeneracies, level crossings, and changes in ground-state character. Near criticality, the convergence exponent departs from its gapped value in a manner governed by the critical response of the low-energy subspace, providing a finite-size probe of the transition. More broadly, the framework is solver-independent and can be incorporated with diverse many-body solvers, including exact diagonalization, tensor networks, neural quantum states, and quantum simulators. It also opens the door to adaptive sampling strategies that place new reference points according to local spectral or wavefunction information, further reducing the cost of exploring parameter space. These extensions establish low-energy subspace learning as a powerful and broadly applicable tool for efficiently resolving quantum states and spectra across families of many-body Hamiltonians.

\emph{Acknowledgments.---}
This work was supported in part by the Beijing Major Science and Technology Project under Contract No. Z251100008125036.
\nocite{Ampelogiannis_arxiv.2405.09388,Hastings2006}
\bibliography{ref}

% ==============================================================================
% =========================== SUPPLEMENTARY MATERIAL ===========================
% ==============================================================================

\clearpage
\onecolumngrid 
\begin{center}
    \textbf{\large Supplemental Material for ``Scaling Theory for Learning Low-Energy Quantum
Subspaces''}
\end{center}

\section{Proof of Theorem~$1$}
\label{sec:proof-main-theorem}

We prove the two parts of Thm.~1 separately.
The first part is a finite-volume variational statement for an isolated low-energy spectral sector.
The second part gives the proof of ground-state finite-size prefactor.

\subsection{Low-energy variational bound}
\label{sec:proof-low-energy}

\begin{assumption}
  \label{ass:sm-low-energy}
  Let $H(\bm{x})$ be a smooth finite-volume Hamiltonian family with system size $n$. Let $D$ be the number of real control parameters,
let $\bm{x}=(x_1,\ldots,x_D)\in\mathbb R^D$ denote the
parameter vector, and fix a target point $\bm{x}_0$.  Let
  \begin{equation*}
    E_0(\bm{x})\le E_1(\bm{x})\le\cdots
  \end{equation*}
  be its eigenvalues. We assume that the first $M$ levels form an isolated
  spectral in a neighborhood of $\bm{x}_0$. Namely,
  \begin{equation*}
    E_M(\bm{x})-E_{M-1}(\bm{x})
    \ge
    \Delta_{\mathrm{high}}>0 .
  \end{equation*}
    Internal degeneracies and crossings inside the first $M$ levels are allowed.
The $M$ retained eigenstates span the low-energy subspace
\begin{equation*}
  \mathcal H(\bm{x})
  =
  \operatorname{span}
  \bigl\{|\psi_a(\bm{x})\rangle\bigr\}_{a=0}^{M-1}.
\end{equation*}
The orthogonal projector onto $\mathcal H(\bm{x})$ is
\begin{equation*}
  P(\bm{x})
  =
  \sum_{a=0}^{M-1}
  |\psi_a(\bm{x})\rangle\langle\psi_a(\bm{x})|.
\end{equation*}
This projector has rank $M$ and is independent of the choice
of orthonormal eigenbasis within the retained state.
The spectral separation ensures that $P(\bm{x})$ depends
smoothly on $\bm{x}$~\cite{kato1966perturbation},
even if the individual eigenvectors cannot be chosen smoothly
across internal level crossings.

Let $K\ge1$ be the number of sampling parameters, and let
$\{\bm{s}_j\}_{j=1}^{K}\subset\mathbb R^D$ be fixed
dimensionless sampling directions.
For a sampling scale $d>0$, define
\begin{equation*}
  \bm{x}_j=\bm{x}_0+d\bm{s}_j,
  \qquad j=1,\ldots,K,
\end{equation*}
with $d$ sufficiently small that all $\bm{x}_j$.
The sampled variational space is
\begin{equation*}
  \cB_{M,K}(d)
  =
  \oplus_{j=1}^K \mathcal H(\bm{x}_j)
  =
  \operatorname{span}
  \bigl\{
    |\psi_a(\bm{x}_j)\rangle
    : a=0,\ldots,M-1,\ j=1,\ldots,K
  \bigr\}.
\end{equation*}

Define a nonnegative integer $q$ as the response order, which can also be used in Taylor expansion as a cancellation order.
Assume that there exist real coefficients $c_1,\ldots,c_K$,
independent of $d$, satisfying
\begin{equation}
  \sum_{j=1}^{K}c_j\bm{s}_j^{\bm{\alpha}}
  =
  \delta_{\bm{\alpha},\bm{0}},
  \qquad |\bm{\alpha}|\le q.
  \label{eq:sm-low-moment}
\end{equation}
Here $\bm{\alpha}=(\alpha_1,\ldots,\alpha_D)$ is a multi-index
of nonnegative integers,
\begin{equation*}
  |\bm{\alpha}|=\sum_{\mu=1}^{D}\alpha_\mu,
  \qquad
  \bm{s}_j^{\bm{\alpha}}
  =\prod_{\mu=1}^{D}(s_{j,\mu})^{\alpha_\mu},
\end{equation*}
where $s_{j,\mu}$ is the $\mu$th component of $\bm{s}_j$.
The symbol $\delta_{\bm{\alpha},\bm{0}}$ equals one when
all components of $\bm{\alpha}$ vanish and zero otherwise.
We will show that Eq.\eqref{eq:sm-low-moment} provides the moment conditions that reproduce the constant term
and cancel all Taylor terms of total degree $1,\ldots,q$.
\end{assumption}
Let
\begin{equation*}
  H_0=H(\bm{x}_0),\qquad
  P_0=P(\bm{x}_0),\qquad
  Q_0=I-P_0,\qquad
  \cL_0=\mathcal H_{\bm{x}_0},
\end{equation*}
where $I$ is the identity operator on the full Hilbert space.
Thus $Q_0$ is the orthogonal projector onto the orthogonal
complement of the target low-energy subspace $\cL_0$.

Assumption~\ref{ass:sm-low-energy} ensures that $P(\bm{x})$
is a smooth rank-$M$ spectral projector near $\bm{x}_0$.
After shrinking the neighborhood if necessary, one can choose
a smooth family of unitary operators $U(\bm{x})$ on the full
Hilbert space such that
\begin{equation*}
  U(\bm{x}_0)=I,
  \qquad
  P(\bm{x})=U(\bm{x})P_0U(\bm{x})^\dagger,
\end{equation*}
where the dagger denotes the Hermitian adjoint.
The restriction of $U(\bm{x})$ to $\cL_0$ therefore gives
an isometric identification
\begin{equation*}
  U(\bm{x})|_{\cL_0}:
  \cL_0\longrightarrow\mathcal H_{\bm{x}}.
\end{equation*}
This identification is a choice of frame for the low-energy
subspaces; its basis vectors need not be individual
eigenvectors of $H(\bm{x})$.
The sampled variational space is independent of this choice.

Using the sampling points $\bm{x}_j=\bm{x}_0+d\bm{s}_j$
and the coefficients $c_j$ from
Eq.~\eqref{eq:sm-low-moment}, define
\begin{equation*}
  T_d=\sum_{j=1}^{K}c_jU(\bm{x}_j)P_0.
\end{equation*}
For every $u\in\cL_0$, each vector $U(\bm{x}_j)u$
belongs to $\mathcal H(\bm{x}_j)$, and hence
\begin{equation*}
  T_d\cL_0\subset\cB_{M,K}(d).
\end{equation*}

Taylor expansion of $U(\bm{x})u$ gives

\begin{equation*}
  U(\bm{x}_0+d\bm{s}_j)u
  =
  \sum_{|\bm{\alpha}|\le q}
  \frac{d^{|\bm{\alpha}|}\bm{s}_j^{\bm{\alpha}}}{\bm{\alpha}!}
  \partial^{\bm{\alpha}}\!\left[U(\bm{x})u\right]_{\bm{x}=\bm{x}_0}
  +
  R_j(u,d),
\end{equation*}
with $\norm{R_j(u,d)}\le C_jd^{q+1}\norm{u}$ for sufficiently small $d$. 
Here $R_j(u,d)$ is the Taylor remainder, and $C_j$ is a
finite constant independent of $u$ and $d$ at fixed volume.

Multiplying by $c_j$ and summing over $j$,
Eq.~\eqref{eq:sm-low-moment} cancels all terms with $1\le |\bm{\alpha}|\le q$ and leaves the zeroth-order term. Thus

\begin{equation}
  T_du=u+r_d(u),
  \qquad
  \norm{r_d(u)}\le C_T^{(n)}d^{q+1}\norm{u}.
  \label{eq:sm-low-subspace-error}
\end{equation}
where 
\begin{equation*}
  r_d(u)=\sum_{j=1}^{K}c_jR_j(u,d),
  \qquad
  C_T^{(n)}=\sum_{j=1}^{K}|c_j|C_j.
\end{equation*}
The constant $C_T^{(n)}$ is independent of $d$ at fixed finite volume,
but may depend on $n$. Therefore $T_d$ is injective on $\cL_0$ for small enough $d$. The subspace

\begin{equation*}
  \cL_d=T_d\cL_0
\end{equation*}
has dimension $M$ and is contained in $\cB_{M,K}(d)$.

We write $\cL_d$ as a graph over $\cL_0$. Let
\begin{equation*}
  A_d=P_0T_dP_0:\cL_0\to\cL_0 .
\end{equation*}
Eq.~\eqref{eq:sm-low-subspace-error} gives $A_d=P_0+O(d^{q+1})$, so $A_d$ is invertible for small $d$. Define

\begin{equation*}
  X_d=Q_0T_dP_0A_d^{-1}:\cL_0\to Q_0\cH .
\end{equation*}
Then
\begin{equation}
  \cL_d=\{u+X_du:u\in\cL_0\},
  \qquad
  \norm{X_d}\le C_X^{(n)}d^{q+1}.
  \label{eq:sm-low-graph}
\end{equation}

Let
\begin{equation*}
  W_d=(1+X_d)(1+X_d^\dagger X_d)^{-1/2}
\end{equation*}
be the canonical isometry from $\cL_0$ to $\cL_d$. The Hamiltonian restricted to $\cL_d$ and pulled back to $\cL_0$ is
\begin{equation*}
  H_{\mathrm{eff}}(d)=W_d^\dagger H_0W_d .
\end{equation*}
Since $P_0$ is a spectral projector of $H_0$, one has $P_0H_0Q_0=Q_0H_0P_0=0$. Consequently
\begin{equation}
  H_{\mathrm{eff}}(d)
  =
  (1+X_d^\dagger X_d)^{-1/2}
  \left[
    P_0H_0P_0
    +
    X_d^\dagger Q_0H_0Q_0X_d
    \right]
  (1+X_d^\dagger X_d)^{-1/2}.
  \label{eq:sm-low-Heff-graph}
\end{equation}
The correction is quadratic in $X_d$. This is the origin of the exponent $2(q+1)$ in the energy bound.

At fixed finite volume, define
\begin{equation*}
  C_{M,q}^{(n)}
  =
  \sup_{0<d\le d_0}
  d^{-2(q+1)}
  \frac{1}{n}
  \left\|
  H_{\mathrm{eff}}(d)-P_0H_0P_0
  \right\|_{\cL_0\to\cL_0}.
\end{equation*}
For sufficiently small $d_0$, this quantity is finite and independent of $d$. Indeed, Eq.~\eqref{eq:sm-low-graph}
and Eq.~\eqref{eq:sm-low-Heff-graph} give
\begin{equation*}
  \left\|
  H_{\mathrm{eff}}(d)-P_0H_0P_0
  \right\|_{\cL_0\to\cL_0}
  =
  O(d^{2(q+1)})
\end{equation*}
at fixed finite volume. The constant may depend on $n$.

Let $\varepsilon_0(d)\le\cdots\le\varepsilon_{M-1}(d)$ be the eigenvalues of
$H_{\mathrm{eff}}(d)$.
The eigenvalues of $P_0H_0P_0$ on $\cL_0$ are $E_0(\bm{x}_0),\ldots,E_{M-1}(\bm{x}_0)$.
Weyl's inequality gives
\begin{equation*}
  \frac{
    |\varepsilon_a(d)-E_a(\bm{x}_0)|
  }{n}
  \le
  C_{M,q}^{(n)}d^{2(q+1)},
  \qquad
  a=0,\ldots,M-1 .
\end{equation*}
Let $\widetilde E_a(d)$ be the Ritz values obtained by diagonalizing $H_0$ in the full sampled space $\cB_{M,K}(d)$.
Since $\cL_d\subset\cB_{M,K}(d)$, the min--max principle gives
\begin{equation*}
  \widetilde E_a(d)\le \varepsilon_a(d),
  \qquad
  a=0,\ldots,M-1 .
\end{equation*}
The same principle applied to the full Hamiltonian gives
\begin{equation*}
  E_a(\bm{x}_0)\le \widetilde E_a(d),
  \qquad
  a=0,\ldots,M-1 .
\end{equation*}
Combining the last three estimates yields
\begin{equation*}
  0\le
  \frac{\widetilde E_a(d)-E_a(\bm{x}_0)}{n}
  \le
  C_{M,q}^{(n)}d^{2(q+1)},
  \qquad
  a=0,\ldots,M-1 .
\end{equation*}
This proves the low-energy part of Thm.~1.

\subsection{Ground-state prefactor}
\label{sec:proof-ground-prefactor}
We prove the finite-size prefactor estimate in the second part of Thm.~1 from
a volume-uniform gap, locality, and finite-order smoothness.
Throughout this subsection, $\Lambda$ is a finite lattice, $n=|\Lambda|$, and $m=q+1$.

\begin{assumption}
  \label{ass:sm-ground-local-family}
 let
  \begin{equation*}
    H_\Lambda(\bm{x})=\sum_{Z\subset\Lambda}h_Z(\bm{x})
  \end{equation*}
where $\Lambda$ is a finite lattice and $h_Z(\bm{x})$
is an interaction term acting on the sites in $Z$. For every $\bm{x}\in V$, let
$|\psi_{0,\Lambda}(\bm{x})\rangle$ be the unique
normalized ground state of $H_\Lambda(\bm{x})$,
with energy $E_{0,\Lambda}(\bm{x})$.
We assume
\begin{equation*}
  E_{1,\Lambda}(\bm{x})-E_{0,\Lambda}(\bm{x})
  \ge \Delta>0,
\end{equation*}
where $E_{1,\Lambda}(\bm{x})$ is the lowest excited-state
energy and $\Delta$ is independent of $\Lambda$ and
$\bm{x}\in V$.

The interaction has continuous parameter derivatives
through order $m+1$. We assume that, for every fixed
$p\ge0$ and $|\bm{\alpha}|\le m+1$,
\begin{equation*}
  \sup_{\Lambda,\,\bm{x}\in U,\,i\in\Lambda}
  \sum_{\substack{Z\subseteq\Lambda\\Z\ni i}}
  \left(1+|Z|+\operatorname{diam}(Z)\right)^p
  \bigl\|\partial^{\bm{\alpha}}h_Z(\bm{x})\bigr\|
  \le C_{\bm{\alpha},p}<\infty.
\end{equation*}
Here $i$ labels a lattice site, $|Z|$ is the number of
sites in $Z$, $\operatorname{diam}(Z)$ is their maximum
distance in lattice units, and $\|\cdot\|$ denotes
the operator norm.
The constants $C_{\bm{\alpha},p}$ are independent of
$\Lambda$ and $\bm{x}\in U$.
This condition expresses that the interaction and its
parameter derivatives remain sufficiently short-ranged,
uniformly in the system size.
It includes finite-range and exponentially decaying
interactions when the same uniform locality bounds
hold for their parameter derivatives.
\end{assumption}

For finite-range or exponentially decaying interactions,
the uniform spectral gap implies exponential clustering:
connected correlations between local observables decay
exponentially with their spatial separation, with bounds
uniform in $\Lambda$ and $\bm{x}\in U$~\cite{Hastings2006}.

Together with locality and smooth parameter dependence,
the gap also allows the change of the ground-state projector
to be generated by a quasi-local Hermitian operator
$K_{\bm v,\Lambda}(\bm{x})$
~\cite{bachmann2012automorphic}:
\begin{equation}
  \partial_{\bm v}P_\Lambda(\bm{x})
  =
  -i[K_{\bm v,\Lambda}(\bm{x}),P_\Lambda(\bm{x})],
  \qquad
  P_\Lambda(\bm{x})
  =
  |\psi_{0,\Lambda}(\bm{x})\rangle
  \langle\psi_{0,\Lambda}(\bm{x})|.
  \label{eq:sm-ground-spectral-flow}
\end{equation}
Here $\bm v=(v_1,\ldots,v_D)$ is a fixed direction in
parameter space,
\begin{equation*}
  \partial_{\bm v}
  =
  \sum_{\mu=1}^{D}v_\mu\partial_{x_\mu},
\end{equation*}
The operator $K_{\bm v,\Lambda}$ generates transport
in parameter space, rather than physical time evolution.

For each nonnegative derivative order $a$ needed below,
write
\begin{equation*}
  K^{(a)}_{\bm v,\Lambda}(\bm{x})
  =
  \partial_{\bm v}^{\,a}K_{\bm v,\Lambda}(\bm{x}),
\end{equation*}
with $a=0$ denoting the generator itself.
Under the assumed locality and smoothness conditions,
these operators admit local decompositions
\begin{equation*}
  K^{(a)}_{\bm v,\Lambda}(\bm{x})
  =
  \sum_{y\in\Lambda}\sum_{\ell=0}^{\infty}
  k^{(a)}_{\bm v;y,\ell}(\bm{x}),
  \qquad
  \operatorname{supp}k^{(a)}_{\bm v;y,\ell}
  \subseteq B_\ell(y).
\end{equation*}
Here $y$ labels a lattice site, and $\ell$ is a nonnegative
integer radius in lattice units.
The notation $\operatorname{supp}k^{(a)}_{\bm v;y,\ell}$
denotes the set of sites on which
$k^{(a)}_{\bm v;y,\ell}$ acts nontrivially, and
\begin{equation*}
  B_\ell(y)
  =
  \{z\in\Lambda:\operatorname{dist}(y,z)\le\ell\}
\end{equation*}
is the lattice ball centered at $y$, where
$\operatorname{dist}(y,z)$ is the lattice distance
between sites $y$ and $z$.
The dependence of the local terms on $\Lambda$ is
suppressed in the notation.

The quasi-locality of the spectral-flow generator and
its finite-order parameter derivatives follows from
the locality estimates of
Ref.~\cite{bachmann2012automorphic}, for any given finite $p>0$
\begin{equation}
  \sup_{\Lambda,\,\bm{x}\in U,\,y\in\Lambda}
  \sum_{\ell=0}^{\infty}
  (1+\ell)^p
  \bigl\|k^{(a)}_{\bm v;y,\ell}(\bm{x})\bigr\|
  \le C_{a,p,\bm v}<\infty.
  \label{eq:sm-ground-rapid-tail}
\end{equation}
The constants are independent of the system size and
can be chosen uniformly for $\bm v$ in a bounded set.
Physically, this means that the generator and its
required derivatives have rapidly decaying spatial tails.
Only finitely many derivative orders $a$ and weights $p$,
depending on the fixed response order $q$, are used below.

We now introduce the connected correlations needed
to count powers of the system size.
For an observable $A$, denote its ground-state expectation
value by
\begin{equation*}
  \langle A\rangle_{\bm{x}}
  =
  \langle\psi_{0,\Lambda}(\bm{x})|
  A|\psi_{0,\Lambda}(\bm{x})\rangle,
\end{equation*}
with the volume label $\Lambda$ suppressed.

For $r\ge1$ possibly noncommuting observables
$O_1,\ldots,O_r$, define the ordered cumulant
\begin{equation}
  \kappa_{\bm{x}}(O_1,\ldots,O_r)
  =
  \sum_{\pi\in\Pi_r}
  (-1)^{|\pi|-1}(|\pi|-1)!
  \prod_{B\in\pi}
  \left\langle
    \prod_{i\in B}^{\longrightarrow}O_i
  \right\rangle_{\bm{x}}.
  \label{eq:sm-ordered-cumulant}
\end{equation}
Here $\Pi_r$ is the set of all partitions of
$\{1,\ldots,r\}$ into nonempty disjoint blocks.
For a partition $\pi$, $|\pi|$ is its number of blocks,
and $B$ denotes one such block.
The arrow means that operators within each block
retain their original order:
if $B=\{i_1<\cdots<i_s\}$, then
\begin{equation*}
  \prod_{i\in B}^{\longrightarrow}O_i
  =
  O_{i_1}\cdots O_{i_s}.
\end{equation*}
The product over blocks is a product of scalar
expectation values, so its order does not matter.
The cumulant extracts the connected part of a correlation
function by subtracting all contributions that factorize
into lower-order correlations.
For example, the second-order cumulant is
\begin{equation*}
  \kappa_{\bm{x}}(O_1,O_2)
  =
  \langle O_1O_2\rangle_{\bm{x}}
  -
  \langle O_1\rangle_{\bm{x}}
  \langle O_2\rangle_{\bm{x}},
\end{equation*}
which is the usual connected two-point correlation.
At higher orders, the same subtraction removes products
of correlations within smaller groups of operators.
Consequently, if the moments factorize between two
independent groups, any cumulant involving operators
from both groups vanishes.
For noncommuting observables, the order of operators
within each moment is retained as specified above.

This connected structure is useful for volume counting:
when connected correlations are sufficiently short-ranged,
summing over relative positions gives a finite contribution,
while the overall position contributes a single factor
of the system size $n$.

\begin{lemma}
\label{lem:sm-ground-connected-block}
Fix a positive integer $R$, the maximum cumulant order,
and let $1\le r\le R$.
Consider $r$ operators with local decompositions
\begin{equation*}
  A_j
  =
  \sum_{y\in\Lambda}\sum_{\ell=0}^{\infty}b_{j;y,\ell},
  \qquad j=1,\ldots,r,
\end{equation*}
where $b_{j;y,\ell}$ is supported within the ball
$B_\ell(y)$ and satisfies the rapid-tail bound
in Eq.~\eqref{eq:sm-ground-rapid-tail},
with constants uniform in $\Lambda$ and $\bm{x}\in V$.
Then
\begin{equation}
  \left|
    \kappa_{\bm{x}}(A_1,\ldots,A_r)
  \right|
  \le C_R n,
  \label{eq:sm-ground-extensive-cumulant}
\end{equation}
uniformly in $\Lambda$ and $\bm{x}\in U$. Here $\kappa_{\bm{x}}$ is defined in Eq.~\eqref{eq:sm-ordered-cumulant}.
The constant $C_R$ may depend on $R$, the lattice geometry,
the clustering bounds, and the local decay bounds,
but not on $\Lambda$ or $\bm{x}$.

The same conclusion holds when one of the operators
$A_j$ is replaced by the local Hamiltonian
$H_\Lambda(\bm{x}')$ at a fixed parameter point
$\bm{x}'\in U$.
\end{lemma}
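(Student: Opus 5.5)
By multilinearity of the cumulant, I would expand each $A_j$ into its local pieces:
\begin{equation*}
\kappa_{\bm{x}}(A_1,\ldots,A_r)=\sum_{y_1,\ldots,y_r\in\Lambda}\ \sum_{\ell_1,\ldots,\ell_r\ge0}\kappa_{\bm{x}}\bigl(b_{1;y_1,\ell_1},\ldots,b_{r;y_r,\ell_r}\bigr).
\end{equation*}
The goal is to bound each term by a product of local norms times a factor that decays with the spatial spread of the configuration $(y_1,\ell_1;\ldots;y_r,\ell_r)$. Once that is done, I sum the relative positions $y_2-y_1,\ldots,y_r-y_1$ against the decay, which gives an $n$-independent constant. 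The only remaining sum is over the anchor $y_1$, and it produces the single factor $n$.

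\textbf{Decay of a single term.} I use the standard tree or cluster bound for cumulants. If the supports $B_{\ell_i}(y_i)$ can be split into two nonempty groups $S, S^c$ separated by distance $\delta$, then every moment nearly factorizes across the split:
\begin{equation*}
\Bigl|\Bigl\langle\prod^{\rightarrow}_{i\in B}b_i\Bigr\rangle-\Bigl\langle\prod^{\rightarrow}_{i\in B\cap S}b_i\Bigr\rangle\Bigl\langle\prod^{\rightarrow}_{i\in B\cap S^c}b_i\Bigr\rangle\Bigr|\le C\prod_{i}\|b_i\|\,e^{-\delta/\xi}.
\end{equation*}
This is exponential clustering [Hastings2006] applied to the two products. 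Products from different groups may be reordered freely because their supports are disjoint, so the ordering inside each block is harmless.

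Because the ordered cumulant of a collection that factorizes exactly between two groups vanishes, substituting these factorized moments into the partition sum in Eq.~\eqref{eq:sm-ordered-cumulant} gives
\begin{equation*}
|\kappa|\le C_R\prod_i\|b_i\|\,e^{-\delta_{\max}/\xi},
\end{equation*}
where $\delta_{\max}$ is the largest gap over all bipartitions. Each partition term is bounded using $|\langle\cdot\rangle|\le\prod\|b_i\|$, and the number of partitions is controlled by $R$. For $r$ balls in a $D'$-dimensional lattice, the largest bipartition gap satisfies
\begin{equation*}
\delta_{\max}\ge\frac{\operatorname{diam}\{y_i\}-2\sum_i\ell_i}{r-1}.
\end{equation*}
This follows from a minimal-spanning-tree argument: the longest edge of the tree connecting the balls is at least the total spread divided by the number of edges.

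\textbf{Summation.} I would fix $y_1$ and all radii $\ell_i$ and sum over $y_2,\ldots,y_r$. There are two regimes.
\begin{itemize}
\item When the spread is at most $2\sum\ell_i$, the number of configurations is $O((1+\sum\ell_i)^{D'(r-1)})$.
\item When the spread exceeds this, the exponential factor makes the sum converge, uniformly in $\Lambda$.
\end{itemize}
In both cases the contribution is at most $C(1+\sum_i\ell_i)^{D'(r-1)}\prod_i\|b_{i;y_i,\ell_i}\|$.

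To control the polynomial weight, I use $(1+\sum\ell_i)^p\le r^p\prod_i(1+\ell_i)^p$. The sum over each $\ell_i$ is then bounded by the rapid-tail estimate Eq.~\eqref{eq:sm-ground-rapid-tail} with $p=D'(R-1)$, once I take the $\sup$ over $y$ for $i\ge2$. Summing over $y_1$ gives $C_R n$.

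\textbf{Main obstacle.} The delicate step is the clustering estimate for \emph{products} of operators in the same group. Exponential clustering is stated for two local observables, whereas here each side is a product of several local pieces with spread-out supports. I would handle this by treating each side as a single observable supported on the union of its balls, with norm at most the product of the norms. The clustering constant in [Hastings2006] depends only on the distance between supports and, at worst, polynomially on the boundary size. That polynomial factor is absorbed by the rapid-tail weights with a larger $p$. Care is also needed so that the decay length does not degrade with $R$, which holds because $R$ is fixed.

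\textbf{The Hamiltonian case.} When some $A_j=H_\Lambda(\bm{x}')$, the same argument applies unchanged. Its decomposition $H=\sum_Z h_Z$ can be re-indexed by a site $y\in Z$ and radius $\ell=\operatorname{diam}Z$, and the locality assumption on $h_Z$ gives exactly the rapid-tail bound of Eq.~\eqref{eq:sm-ground-rapid-tail}.
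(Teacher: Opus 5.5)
Your proposal is correct. It shares the paper's overall architecture: a multilinear expansion into ball pieces, exponential decay of each term in the longest minimum-spanning-tree edge (your largest bipartition gap $\delta_{\max}$ is exactly that edge length), anchoring one center and summing relative positions against the decay, absorbing radii and clustering support-size factors into the rapid-tail weights, and re-indexing $H_\Lambda(\bm{x}')$ by a site and a diameter.

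Where you genuinely differ is the proof of the $r$-point cluster bound.
\begin{itemize}
\item The paper argues by induction on $r$. It cuts the longest tree edge, applies two-point clustering to the grouped products $O_I,O_J$, and expands $\langle\delta O_I\,\delta O_J\rangle$ by the moment--cumulant relation to isolate the one-block term. The other crossing partitions are controlled by the inductive hypothesis, since a crossing sub-block's own tree must contain an edge at least as long as the cut. It allows $\mu_r,b_r$ to deteriorate with $r$.
\item You instead note that the ordered cumulant, viewed as a polynomial in the subset moments, vanishes identically when every moment is replaced by its factorized form across the cut. Telescoping then bounds $\kappa$ by the factorization defects of the crossing moments, each of which is a single two-point clustering estimate for products.
\end{itemize}
Your route is non-inductive, keeps the two-point clustering rate with no $r$-dependent bookkeeping, and makes the combinatorial constant explicit. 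The paper's route needs clustering only for the two full grouped products at each level and recycles lower-order cumulant bounds. You need the defect for every crossing sub-moment, but these are finitely many and all across the same cut, so this costs nothing. Your explicit estimate $\delta_{\max}\ge(\operatorname{diam}\{y_i\}-2\sum_i\ell_i)/(r-1)$ also makes precise what the paper states only qualitatively about large balls bridging distant centers.

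One small repair is needed in the summation. After bounding $\|b_{i;y_i,\ell_i}\|$ by $\sup_y\|b_{i;y,\ell_i}\|$ for $i\ge2$, the resulting $\ell_i$-sum has the supremum inside, unlike Eq.~\eqref{eq:sm-ground-rapid-tail}. Since that bound holds for every $p$, it gives $\sup_y\|b_{i;y,\ell}\|\le C(1+\ell)^{-p-2}$. Using it with $p$ raised by two therefore closes the step.
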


\begin{proof}
  We include the summability argument because a two-point estimate alone does
  not immediately give Eq.~\eqref{eq:sm-ground-extensive-cumulant}.
  First take local observables $o_a$ supported on finite sets $X_a$.  Let
  $L(X_1,\ldots,X_r)$ be the length of the longest edge in a minimum spanning tree whose
  vertices are the supports and whose edge weights are support distances.
  We claim, by induction on $r$, that
  \begin{equation}
    \left|\kappa_{\bm{x}}(o_1,\ldots,o_r)\right|
    \le
    C_r
    \left(\sum_{a=1}^r|X_a|\right)^{b_r}
    \prod_{a=1}^r\norm{o_a}
    e^{-\mu_r L(X_1,\ldots,X_r)},
    \label{eq:sm-ground-longest-edge-bound}
  \end{equation}
  for constants $b_r<\infty$ and $\mu_r>0$ that may depend on $r$ but not on
  the volume.  For $r=2$, this is exponential clustering.  Suppose the claim
  holds below order $r$.  Remove a longest edge of a minimum spanning tree,
  producing a cut $  I\cup J=\{1,\ldots,r\}, \quad I \cap J=\varnothing$.  Every support on one side is
  separated from every support on the other by at least the removed edge
  length; otherwise the tree could be reconnected across a shorter edge.
  Since separated local observables commute, the original ordered product can
  be grouped into the two ordered products
  \begin{equation*}
    O_I=\prod_{a\in I}^{\longrightarrow}o_a,
    \qquad
    O_J=\prod_{a\in J}^{\longrightarrow}o_a .
  \end{equation*}

  Define the fluctuation operators
\begin{equation*}
  \delta O_I=O_I-\langle O_I\rangle_{\bm{x}},
  \qquad
  \delta O_J=O_J-\langle O_J\rangle_{\bm{x}},
\end{equation*}
Their connected correlation is
\begin{equation*}
  \langle\delta O_I\,\delta O_J\rangle_{\bm{x}}
  =
  \langle O_I O_J\rangle_{\bm{x}}
  -
  \langle O_I\rangle_{\bm{x}}\langle O_J\rangle_{\bm{x}}.
\end{equation*}
  Two-point clustering applied to $O_I$ and $O_J$
bounds their connected correlation by
  the right-hand side of Eq.~\eqref{eq:sm-ground-longest-edge-bound}, with the
  support-size prefactor replaced by a fixed polynomial in
  $\sum_a|X_a|$.

  Expand this connected correlation using the
moment--cumulant relation.  Partitions whose
  blocks lie entirely in $I$ or entirely in $J$ cancel exactly against
  $\langle O_I\rangle\langle O_J\rangle$.  Every remaining partition has a
  block meeting both sides of the cut.  The one-block partition is the desired
  $r$-point cumulant.  In every other remaining partition, each crossing block
  has order strictly smaller than $r$ and therefore obeys the induction
  hypothesis with an exponential factor in the cut length.  All noncrossing
  blocks are bounded by the corresponding moment bounds.  The number of
  partitions is finite at fixed $r$.  After increasing $b_r$ and decreasing
  $\mu_r$ if necessary, this proves
  Eq.~\eqref{eq:sm-ground-longest-edge-bound}.  The possible degradation of
  $\mu_r$ is harmless because $r\le R$ is fixed.

  We now insert the ball decompositions.  Fixing the center of one ball leaves
  only relative center positions and radii to sum.  At fixed radii, the number
  of configurations with longest tree edge of order $L$ grows only
  polynomially in $L$, because the lattice has polynomial volume growth and
  $r$ is fixed.  The exponential factor in
  Eq.~\eqref{eq:sm-ground-longest-edge-bound} makes this relative-position sum
  finite.  If large balls bridge distant centers, the required support-size
  factors and ball radii are absorbed by choosing the polynomial weight $p$ in
  Eq.~\eqref{eq:sm-ground-rapid-tail} sufficiently large.  Thus all relative
  sums are bounded uniformly, while the first ball center has at most $n$
  choices.  This proves Eq.~\eqref{eq:sm-ground-extensive-cumulant}.  A local
  Hamiltonian insertion has the same weighted decomposition and is treated
  identically.  Scalar shifts of that insertion do not affect cumulants of
  order at least two.
\end{proof}

The preceding proof is a finite-order statement.  It neither requires a
rate uniform in $r$ nor asserts a cumulant bound involving several parameter
points.  General inheritance of clustering by higher connected correlations
is discussed in Ref.~\cite{Ampelogiannis_arxiv.2405.09388}; the explicit
longest-edge induction above supplies the finite-volume summability needed
here.  Below we use Lemma~\ref{lem:sm-ground-connected-block} only through
order $R=2m+1$: at most $2m$ response insertions and one Hamiltonian
insertion occur.

We next fix the ground-state gauge.  Put
\begin{equation}
  \widehat H_{0,\Lambda}
  =H_\Lambda(\bm{x}_0)-E_{0,\Lambda}(\bm{x}_0)\ge0,
  \qquad
  \norm{u}_{\widehat H_0}^2
  =\bra{u}\widehat H_{0,\Lambda}\ket{u}.
  \label{eq:sm-ground-energy-seminorm}
\end{equation}

For a direction $\bm v\in\mathbb R^D$, consider the radial
path $\bm{x}_0+t\bm v$, with $t$ chosen so that the path
remains in $V$.
Let $U_\Lambda(t,\bm v)$ be the unitary transport operator
along this path.
It satisfies
\begin{equation}
  \partial_t U_\Lambda(t,\bm{v})
  =
  -iK_{\bm{v},\Lambda}(\bm{x}_0+t\bm{v})
  U_\Lambda(t,\bm{v}),
  \qquad
  U_\Lambda(0,\bm{v})=\id.
  \label{eq:sm-ground-radial-transport}
\end{equation}
We choose the scalar part of the generator to fix the
ground-state phase by parallel transport along each
radial path.
Under Assumption~\ref{ass:sm-ground-local-family},
the spectral-flow generator depends smoothly on the
path parameter $t$ and direction $\bm v$, with continuous
derivatives through order $m$.
The transported state therefore also varies smoothly:
\begin{equation*}
  |\psi_{0,\Lambda}(\bm{x}_0+t\bm v)\rangle
  =
  U_\Lambda(t,\bm v)
  |\psi_{0,\Lambda}(\bm{x}_0)\rangle.
\end{equation*}
This construction fixes the phase of the ground-state
vector throughout the neighborhood, with continuous
parameter derivatives through order $m$,
including at $\bm{v}=0$, and the transported state satisfies
\begin{equation}
  \left\langle
    \psi_{0,\Lambda}(\bm{x}_0+t\bm{v})
    \,\middle|\,
    \partial_t\psi_{0,\Lambda}(\bm{x}_0+t\bm{v})
  \right\rangle
  =0.
  \label{eq:sm-ground-parallel-gauge}
\end{equation}

In the chosen gauge, the ground-state has
continuous parameter derivatives through order $m$.
Consequently, mixed derivatives commute through order $m$.
For any direction $\bm v\in\mathbb R^D$ and every
integer $1\le r\le m$,
\begin{equation}
  \left.
    \partial_{\bm v}^{\,r}
    |\psi_{0,\Lambda}(\bm{x})\rangle
  \right|_{\bm{x}=\bm{x}_0}
  =
  \sum_{|\bm{\alpha}|=r}
  \frac{r!}{\bm{\alpha}!}\bm v^{\bm{\alpha}}
  \left.
    \partial^{\bm{\alpha}}
    |\psi_{0,\Lambda}(\bm{x})\rangle
  \right|_{\bm{x}=\bm{x}_0}.
  \label{eq:sm-ground-directional-polynomial}
\end{equation}

Before estimating higher-order derivatives of the
ground state, we establish a uniform energy bound
for a single sampled state.
For a fixed bounded direction $\bm v$, set
\begin{equation*}
  f_{\bm v}(t)
  =\bra{\psi_{0,\Lambda}(\bm{x}_0+t\bm v)}
  \widehat H_{0,\Lambda}
  \ket{\psi_{0,\Lambda}(\bm{x}_0+t\bm v)} .
\end{equation*}
Writing $K(t)=K_{\bm v,\Lambda}(\bm{x}_0+t\bm v)$ and differentiating along
the spectral flow gives
\begin{equation}
  f_{\bm v}''(t)
  =i\left\langle[K'(t),\widehat H_{0,\Lambda}]\right\rangle_t
  -\left\langle[K(t),[K(t),\widehat H_{0,\Lambda}]]\right\rangle_t .
  \label{eq:sm-ground-snapshot-second-derivative}
\end{equation}
where $ K'(t)=\frac{dK(t)}{dt}$.
The interaction-norm commutator estimates, together with the Lemma.~ on $K$ and $K'$, imply
\begin{equation*}
  \sup_{\Lambda,\,|t|\le t_0,\,\norm{\bm v}\le S}
  |f_{\bm v}''(t)|\le C_S n
\end{equation*}
Here $S>0$ is the maximum allowed norm of the directions.
We choose $t_0>0$, independently of the volume, so that
$\bm{x}_0+t\bm v\in V$ whenever
$|t|\le t_0$ and $\|\bm v\|\le S$. This is a locality estimate:
the commutators in Eq.~\eqref{eq:sm-ground-snapshot-second-derivative} are
again extensive quasi-local interactions.  Since
$f_{\bm v}(0)=f_{\bm v}'(0)=0$, integral Taylor expansion yields
\begin{equation}
  0\le \frac{f_{\bm v}(t)}{n}\le C_S t^2 .
  \label{eq:sm-ground-single-snapshot}
\end{equation}

\begin{lemma}
\label{lem:sm-ground-jet-bounds}
Fix an integer $1\le r\le m$ and a direction $\bm v$
with $\|\bm v\|\le S$.
Along the ray $\bm{x}=\bm{x}_0+t\bm v$ within $V$,
\begin{equation}
  \|\partial_{\bm v}^{\,r}\psi_{0,\Lambda}(\bm{x})\|^2
  \le C_r n^r.
  \label{eq:sm-ground-jet-norm}
\end{equation}
For a fixed constant $\rho>0$, if additionally
$n\|\bm{x}-\bm{x}_0\|^2\le\rho$, then
\begin{equation}
  \|\partial_{\bm v}^{\,r}\psi_{0,\Lambda}(\bm{x})\|_{\widehat H_0}^2
  \le C_r n^r.
  \label{eq:sm-ground-jet-energy}
\end{equation}
The constants may depend on $r$, $S$, and $\rho$,
but are independent of the volume and uniform
over the indicated parameter points and directions.
\end{lemma}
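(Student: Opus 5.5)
Along the ray we write $|\psi(t)\rangle=U_\Lambda(t,\bm v)|\psi_0\rangle$ with $\partial_tU=-iK(t)U$ from Eq.~\eqref{eq:sm-ground-radial-transport}. Differentiating repeatedly expresses $\partial_t^r|\psi(t)\rangle$ as a finite linear combination of terms
\begin{equation*}
  (-i)^{s}K^{(a_1)}(t)\cdots K^{(a_s)}(t)\,|\psi(t)\rangle,
  \qquad s+\textstyle\sum_i a_i=r,\quad 1\le s\le r,
\end{equation*}
with combinatorial coefficients depending only on $r$. This follows by induction on $r$: differentiating a $K^{(a)}$ raises $a$ by one, and differentiating the state inserts one more $-iK$. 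The scalar part of $K$ is fixed by the parallel-transport gauge Eq.~\eqref{eq:sm-ground-parallel-gauge}, so we may replace each $K^{(a)}$ by its centered version $\delta K^{(a)}=K^{(a)}-\langle K^{(a)}\rangle_t$. The price is lower-order terms carrying scalar factors $\langle K^{(a)}\rangle_t$, and these can be $O(n)$. To avoid tracking these scalars, the cleaner route is to work directly with ordered cumulants.

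\textbf{Norm bound.} Each squared norm $\|\partial_t^r\psi\|^2$ is a finite sum of expectations
\begin{equation*}
  \bigl\langle K^{(a_s)\dagger}\cdots K^{(a_1)\dagger}K^{(b_1)}\cdots K^{(b_{s'})}\bigr\rangle_t ,
\end{equation*}
which are moments of at most $2r$ quasi-local extensive operators. Expanding each moment into ordered cumulants over partitions, Lemma~\ref{lem:sm-ground-connected-block} bounds every block by $C n$, so a partition with $k$ blocks contributes $O(n^k)$. The moment itself has up to $2r$ factors, so the naive count gives $n^{2r}$, and the key step is to show that the total power is only $n^r$. The mechanism is the gauge. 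Terms with singleton blocks $\langle K^{(a)}\rangle$ cancel once the centered $\delta K$ is used; equivalently, the parallel-transport condition and its derivatives kill the one-point pieces. Once all one-point cumulants are removed, every block has size at least two, so $k\le(s+s')/2$. We also need $\sum$ of orders $s\le r$ on each side, with a weighting in which each $K^{(a)}$ counts as $a+1$ derivatives. A $K^{(a)}$ with $a\ge1$ then consumes more derivatives per insertion, and the count becomes $k\le \#\text{insertions}\le$ (total derivatives)$/$... Concretely, I will assign each insertion weight $a+1\ge1$, so that total weight $2r$ with all blocks of size $\ge2$ gives at most $r$ blocks. This yields $C_rn^r$.

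\textbf{Main obstacle.} The difficult part is making the centering precise. Terms like $\langle K\rangle^2$ times lower jets reappear when differentiating, and one must verify that, in the parallel-transport gauge, the state derivative equals a polynomial in \emph{centered} operators acting on $\psi$ up to terms that are again covered by the induction. My first attempt would be to define $\tilde K=K-\langle K\rangle_t$, which is legitimate because the generator's scalar part is exactly what the gauge fixes. Then $\partial_t\psi=-i\tilde K\psi$ and $\partial_t\tilde K=\tilde K'-\partial_t\langle K\rangle$. The scalar $\partial_t\langle K\rangle$ is itself a connected quantity of size $O(n)$ carrying weight $2$, consistent with the counting above, since a scalar of weight $w$ is $O(n^{w/2})$ by the same lemma.

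\textbf{Energy bound.} For Eq.~\eqref{eq:sm-ground-jet-energy}, the same expansion is used with one insertion of $\widehat H_{0,\Lambda}$ in the middle, as permitted by the last clause of Lemma~\ref{lem:sm-ground-connected-block}. The difference is that $\widehat H_{0}$ is not annihilating $\psi(t)$ for $t\ne0$. Writing $\widehat H_0=(H(\bm x)-E_0(\bm x))+(H(\bm x_0)-H(\bm x)) + (E_0(\bm x)-E_0(\bm x_0))$, the first term kills $\psi(t)$ on one side, which forces any block containing it to be connected to $K$'s, again costing weight $\ge2$ per factor of $n$. The remaining pieces are $O(n\|\bm x-\bm x_0\|)$-sized extensive perturbations. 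Each of them contributes an extra factor of at most $n t$ or $n t^2$ relative to the norm bound, and these are controlled using $nt^2\le\rho$ together with Eq.~\eqref{eq:sm-ground-single-snapshot}. This gives $C_r n^r$ with a constant depending on $\rho$.
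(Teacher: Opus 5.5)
Your norm bound is sound, but your energy bound does not close as written: the $O(nt)$ pieces must be recombined, not bounded one by one.

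\textbf{Norm bound.} Your argument is correct and is a more explicit version of the paper's. The paper packages the count into $G_{\bm x}(u,v)$ and $L_{\bm x}=\log G_{\bm x}$, whose degree-one coefficients vanish in the parallel gauge. You instead expand each moment of products of $K^{(a)}$ directly into ordered cumulants and count weights. One claim needs correcting: the gauge does not kill all one-point pieces. From $\frac{d}{dt}\langle A\rangle_t=\langle A'\rangle_t+i\langle[K,A]\rangle_t$ you get $\langle K\rangle_t=\langle K'\rangle_t=0$. However, $\langle K''\rangle_t=-i\langle[K,K']\rangle_t$, which is generically nonzero. This is harmless once the counting is stated correctly. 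Give $K^{(a)}$ weight $a+1$. The only weight-one block is the singleton $\{K\}$, whose cumulant vanishes. So every nonvanishing block has weight $\ge 2$ (not size $\ge2$), total weight $2r$ allows at most $r$ blocks, and you get $C_rn^r$.

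\textbf{Energy bound.} Your split is $\widehat H_0=\widehat H_t+\Delta H+\Delta E$, with $\widehat H_t=H(\bm x)-E_0(\bm x)$, $\Delta H=H(\bm x_0)-H(\bm x)$ and $\Delta E=E_0(\bm x)-E_0(\bm x_0)$.
\begin{itemize}
\item The scalar $\Delta E$ is generically of order $nt$ by Hellmann--Feynman, and it multiplies the full $\|\partial_t^r\psi\|^2\sim n^r$. The same holds for $\langle\Delta H\rangle_t$.
\item An ``extra factor $nt$ relative to the norm bound'' is therefore $n^{r+1}t$. The condition $nt^2\le\rho$ only gives $n^{r+1}t\le\sqrt{\rho}\,n^{r+1/2}$, not $Cn^r$.
\end{itemize}
The missing step is to split $\widehat H_0$ into its mean and its fluctuation.
\begin{itemize}
\item \emph{Mean.} Since $\langle\widehat H_t\rangle_t=0$, the two $O(nt)$ scalars recombine into $\langle\Delta H\rangle_t+\Delta E=\langle\widehat H_0\rangle_t=f_{\bm v}(t)$. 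By Eq.~\eqref{eq:sm-ground-single-snapshot} this is $O(n\|\bm x-\bm x_0\|^2)=O(\rho)$. It is the only term that multiplies the full norm bound.
\item \emph{Fluctuation.} The centered part $\widehat H_0-\langle\widehat H_0\rangle_t$ is handled purely by cluster counting, using the Hamiltonian-insertion clause of Lemma~\ref{lem:sm-ground-connected-block}. Its block cannot be a singleton, so it absorbs $K$-weight at least one. That leaves at most $\lfloor(2r-1)/2\rfloor=r-1$ other blocks, for a total of $O(n^r)$.
\end{itemize}
No smallness in $t$ is needed for the fluctuation, and splitting off $\widehat H_t$ becomes unnecessary.

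This mean/fluctuation split is exactly what the paper's decomposition $F_{\bm x}=G_{\bm x}J_{\bm x}$ encodes. The constant term $J_{\bm x}(0,0)=\langle\widehat H_0\rangle_{\bm x}=O(n\delta^2)$ multiplies $[u^rv^r]G_{\bm x}=O(n^r)$. Every nonconstant coefficient of $J_{\bm x}$ multiplies a $G_{\bm x}$-coefficient of degree at most $2r-1$, which is $O(n^{r-1})$.
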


\begin{proof}
  Fix a base point $\bm{x}$ on the ray and introduce two scalar displacements
  $u,v$ along the same ray:
  \begin{align}
    G_{\bm{x}}(u,v)
     & =\bra{\psi_{0,\Lambda}(\bm{x}+u\bm v)}
    \psi_{0,\Lambda}(\bm{x}+v\bm v)\rangle,
    \label{eq:sm-ground-overlap-kernel}
    \\
    F_{\bm{x}}(u,v)
     & =\bra{\psi_{0,\Lambda}(\bm{x}+u\bm v)}
    \widehat H_{0,\Lambda}
    \ket{\psi_{0,\Lambda}(\bm{x}+v\bm v)}.
    \label{eq:sm-ground-energy-kernel}
  \end{align}

Repeated differentiation of the transport equation
expresses the Taylor coefficients of $G_{\bm{x}}(u,v)$
as finite sums of ordered expectation values involving
the spectral-flow generator and its parameter derivatives.
The coefficients of $F_{\bm{x}}(u,v)$ contain one
additional insertion of $\widehat H_{0,\Lambda}$.
The operator ordering follows from differentiating
the bra and ket in these expressions.

Since
$G_{\bm{x}}(0,0)=1$, near $u=0,\quad v=0$, we can define
\begin{equation*}
  L_{\bm{x}}=\log G_{\bm{x}},
  \qquad
  J_{\bm{x}}=\frac{F_{\bm{x}}}{G_{\bm{x}}}
\end{equation*}
Taking the logarithm removes contributions that factorize
into separate correlation blocks.
Similarly, dividing $F_{\bm{x}}$ by $G_{\bm{x}}$
removes factors disconnected from the insertion of
$\widehat H_{0,\Lambda}$.

The Taylor coefficients of $L_{\bm{x}}$ are finite sums
of connected correlations involving the quasi-local
generator and its parameter derivatives.
The nonconstant coefficients of $J_{\bm{x}}$ have the
same structure, with each connected correlation also
containing an insertion of $\widehat H_{0,\Lambda}$.

Let $[u^av^b]$ denote the coefficient of $u^av^b$
in the Taylor expansion about $u=v=0$, where $a,b$
are nonnegative integers.
Applying Lemma~\ref{lem:sm-ground-connected-block}
at the base point $\bm{x}$ gives
\begin{align*}
  \left|[u^av^b]L_{\bm{x}}\right|
  &\le C_{a,b}n,
  && a+b\ge2, \\
  \left|[u^av^b]J_{\bm{x}}\right|
  &\le C_{a,b}n,
  && a+b\ge1,
\end{align*}
at all fixed response orders required below.
The constants may depend on $a,b$ and the fixed
direction bound, but are independent of the volume
and uniform over the parameter region under consideration.

The constant term
\begin{equation*}
  J_{\bm{x}}(0,0)
  =
  \langle\widehat H_{0,\Lambda}\rangle_{\bm{x}}
\end{equation*}
is estimated separately below.

Eq.~\eqref{eq:sm-ground-parallel-gauge} makes the two
degree-one coefficients of $L_{\bm{x}}$ vanish.
Since $L_{\bm{x}}(0,0)=0$, each factor from $L$
contributes at least two powers of $u,v$ and at most
one power of $n$. Therefore,
\begin{equation*}
  |[u^av^b]G_{\bm{x}}|
  \le
  \begin{cases}
    C_{a,b}n^{(a+b)/2},
      & a+b \text{ even},\\
    C_{a,b}n^{(a+b-1)/2},
      & a+b \text{ odd}.
  \end{cases}
\end{equation*}
Taking $a=b=r$ gives
\begin{equation*}
  \|\partial_{\bm v}^{\,r}\psi_{0,\Lambda}(\bm{x})\|^2
  =
  (r!)^2[u^rv^r]G_{\bm{x}}
  \le C_rn^r,
\end{equation*}
proving Eq.~\eqref{eq:sm-ground-jet-norm}.

  At $\bm{x}_0$, the relation
  $\widehat H_{0,\Lambda}\ket{\psi_{0,\Lambda}(\bm{x}_0)}=0$ makes the
  constant and degree-one coefficients of $J_{\bm{x}_0}$ vanish.  With
  $\delta=\norm{\bm{x}-\bm{x}_0}$, the uniform linked-coefficient bounds,
  Eq.~\eqref{eq:sm-ground-single-snapshot}, and the fundamental theorem of
  calculus along the ray give
  \begin{align*}
    |J_{\bm{x}}(0,0)|    & \le Cn\delta^2,
    \\
    |[u^av^b]J_{\bm{x}}| & \le Cn\delta,
                         &                 & a+b=1,
    \\
    |[u^av^b]J_{\bm{x}}| & \le C_{a,b}n,
                         &                 & a+b\ge2.
  \end{align*}
  In the coefficient $u^rv^r$ of $F=GJ$, the degree-zero part of $J$
  contributes at most
  $Cn\delta^2 n^r\le C\rho n^r$.  A degree-one part of $J$ multiplies a
  $G$ coefficient of total degree $2r-1$, which has at most $r-1$ linked
  factors, and contributes at most $Cn\delta n^{r-1}\le Cn^r$.  Every term
  of $J$ of degree at least two also multiplies at most $r-1$ linked factors
  from $G$ and is $O(n^r)$.  Thus
\begin{equation*}
  \left.
    \partial_u^r\partial_v^rF_{\bm{x}}(u,v)
  \right|_{u=v=0}
  =
  \|\partial_{\bm v}^{\,r}
    \psi_{0,\Lambda}(\bm{x})\|_{\widehat H_0}^2
  \le C_rn^r,
\end{equation*}
proving Eq.~\eqref{eq:sm-ground-jet-energy}.
 
\end{proof}

We now apply the derivative bounds  to the sample patterns.  Define
\begin{equation*}
  \ket{\Phi_{d,\Lambda}}
  =\sum_{j=1}^Kc_j
  \ket{\psi_{0,\Lambda}(\bm{x}_0+d\bm{s}_j)},
  \qquad
  S=\max_j\norm{\bm{s}_j}.
\end{equation*}
Here the coefficients $c_j$ satisfy the moment conditions
in Eq.~\eqref{eq:sm-low-moment}, and $S$ is the maximum
norm of the sampling directions.
If some $\bm{s}_j=0$, the exact target ground state already belongs to the
sampled space and the claimed variational bound is immediate.  We henceforth
assume the stencil contains no target point.

First consider the near regime, there exits a $\rho$ such that:
\begin{equation*}
  nd^2S^2\le\rho .
\end{equation*}
One-dimensional Taylor expansion along the $j$th ray gives
\begin{equation*}
  |\psi_{0,\Lambda}(\bm{x}_0+d\bm{s}_j)\rangle
  =
  \sum_{r=0}^{m-1}\frac{d^r}{r!}
  \left.
    \partial_{\bm{s}_j}^{\,r}
    |\psi_{0,\Lambda}(\bm{x})\rangle
  \right|_{\bm{x}=\bm{x}_0}
  +
  d^m|R_{j,m,\Lambda}(d)\rangle,
\end{equation*}
with
\begin{equation*}
  |R_{j,m,\Lambda}(d)\rangle
  =
  \frac{1}{(m-1)!}
  \int_0^1(1-t)^{m-1}
  \left.
    \partial_{\bm{s}_j}^{\,m}
    |\psi_{0,\Lambda}(\bm{x})\rangle
  \right|_{\bm{x}=\bm{x}_0+td\bm{s}_j}
  \,dt.
\end{equation*}
Eq.~\eqref{eq:sm-ground-directional-polynomial} and the moment conditions
in Eq.~\eqref{eq:sm-low-moment} cancel the terms of degrees $1,\ldots,m-1$.
Consequently,
\begin{equation}
  \ket{\Phi_{d,\Lambda}}
  =\ket{\psi_{0,\Lambda}(\bm{x}_0)}
  +d^m\ket{R_{m,\Lambda}(d)},
  \qquad
  \ket{R_{m,\Lambda}(d)}
  =\sum_jc_j\ket{R_{j,m,\Lambda}(d)}.
  \label{eq:sm-ground-Phi-remainder}
\end{equation}

Both the Hilbert norm and the positive energy seminorm in
Eq.~\eqref{eq:sm-ground-energy-seminorm} satisfy the triangle inequality.
Applying Lemma~\ref{lem:sm-ground-jet-bounds} pointwise inside the integrals,
rather than forming a matrix element between different parameter points,
gives
\begin{equation*}
  \norm{R_{m,\Lambda}(d)}\le C_mn^{m/2},
  \qquad
  \norm{R_{m,\Lambda}(d)}_{\widehat H_0}\le C_mn^{m/2}.
\end{equation*}
It also gives the denominator estimate
\begin{equation*}
  \norm{\Phi_{d,\Lambda}}
  \ge1-C_m(nd^2)^{m/2}.
\end{equation*}
Since the sampling directions and coefficients are fixed, $\rho$ can be chosen so that the right-hand side
is at least $1/2$ throughout the near regime.  Using
$\widehat H_{0,\Lambda}\ket{\psi_{0,\Lambda}(\bm{x}_0)}=0$ and the variational
principle, we obtain
\begin{align}
  0
   & \le \widetilde E_{0,\Lambda}(d)-E_{0,\Lambda}(\bm{x}_0)\notag \\
   & \le
  \frac{d^{2m}
    \norm{R_{m,\Lambda}(d)}_{\widehat H_0}^2}
  {\norm{\Phi_{d,\Lambda}}^2}
  \le C_m n^m d^{2m}.
  \label{eq:sm-ground-near-bound}
\end{align}

It remains to treat the complementary regime $nd^2S^2>\rho$.  Choose any
fixed sampled direction $\bm{s}_{j_*}$.  The corresponding normalized sampled
ground state is an admissible Ritz vector, so
Eq.~\eqref{eq:sm-ground-single-snapshot} gives
\begin{equation}
  0\le
  \frac{\widetilde E_{0,\Lambda}(d)-E_{0,\Lambda}(\bm{x}_0)}{n}
  \le C_S d^2 .
  \label{eq:sm-ground-far-snapshot}
\end{equation}
Since $nd^2S^2>\rho$, we have
\begin{equation*}
  d^2
  \le\left(\frac{S^2}{\rho}\right)^q
  n^q d^{2(q+1)}.
\end{equation*}
Combining this inequality with Eq.~\eqref{eq:sm-ground-far-snapshot}, and
dividing Eq.~\eqref{eq:sm-ground-near-bound} by $n$ in the near regime, yields
in both cases
\begin{equation}
  0\le
  \frac{\widetilde E_{0,\Lambda}(d)-E_{0,\Lambda}(\bm{x}_0)}{n}
  \le A_q n^q d^{2(q+1)},
\label{eq:sc-fixed-order-input}
\end{equation}
where $A_q$ is independent of $n$ and $d$.  This proves the ground-state
prefactor estimate in Thm.~1.

\section{Sample complexity from local analytic response}
\label{sec:sample-complexity-linked-analytic}

The proof of Thm.~1 has already made the system-size
dependence explicit through the factor $n^q$.
The assumption below controls how the remaining
prefactor $A_q$ grows with the response order $q$.
\begin{assumption}
\label{ass:sc-linked-analytic-response}
Let $V$ be the fixed parameter neighborhood of
$\bm{x}_0$ used above, independent of the system size.
For a unit direction $\bm v$ and a base point
$\bm{x}=\bm{x}_0+t\bm v$ on a radial path within $V$,
consider the overlap and energy kernels
$G_{\bm{x}}(u,v)$ and $F_{\bm{x}}(u,v)$ defined in
Eqs.~\eqref{eq:sm-ground-overlap-kernel}
and~\eqref{eq:sm-ground-energy-kernel}.
Here $u$ and $v$ are scalar displacements along
the direction $\bm v$.
Define
\begin{equation*}
  L_{\bm{x}}(u,v)=\log G_{\bm{x}}(u,v),
  \qquad
  J_{\bm{x}}(u,v)
  =\frac{F_{\bm{x}}(u,v)}{G_{\bm{x}}(u,v)},
\end{equation*}
where the logarithm is chosen continuously near
$u=v=0$ so that $L_{\bm{x}}(0,0)=0$.

We assume that these functions have convergent
Taylor expansions near $u=v=0$.
After removing one factor of the system size
$n=|\Lambda|$, their connected response coefficients
satisfy
\begin{align}
  \frac{1}{n}
  \left|[u^av^b]L_{\bm{x}}\right|
  &\le
  M_{\rm an}R_{\rm an}^{-(a+b)},
  && a+b\ge2,
  \label{eq:sc-analytic-overlap-coefficients}\\
  \frac{1}{n}
  \left|[u^av^b]J_{\bm{x}}\right|
  &\le
  M_{\rm an}R_{\rm an}^{-(a+b)},
  && a+b\ge1.
  \label{eq:sc-analytic-energy-coefficients}
\end{align}
Here $a,b$ are nonnegative integers, and
$[u^av^b]$ denotes the coefficient of $u^av^b$
in the Taylor expansion about $u=v=0$.
The positive constants $M_{\rm an}$ and $R_{\rm an}$
are independent of $\Lambda$, $a$, and $b$, and the
bounds hold uniformly over the indicated base points
and unit directions.
Physically, this assumption controls the growth of
connected responses as the response order increases.
The constant term $J_{\bm{x}}(0,0)$ is controlled
separately by the single-sampled-state energy bound.

For each response order $q=m-1$, with $m\ge1$,
choose $K_m$ sampling directions
$\bm{s}_j^{(m)}$ and coefficients $c_j^{(m)}$,
$j=1,\ldots,K_m$, satisfying the moment conditions
in Eq.~\eqref{eq:sm-low-moment} through order $m-1$.
We assume
\begin{equation}
  \sum_{j=1}^{K_m}
  |c_j^{(m)}|
  \left(1+\|\bm{s}_j^{(m)}\|\right)^m
  \le C_{\rm st}^{\,m},
  \label{eq:sc-stable-stencil}
\end{equation}
and
\begin{equation*}
  S_*=
  \sup_{m\ge1}\max_{1\le j\le K_m}
  \|\bm{s}_j^{(m)}\|<\infty.
\end{equation*}
The constants $C_{\rm st}$ and $S_*$ are independent
of the system size, the sampling scale $d$, and $m$.
These conditions control the growth of the sampling
coefficients and keep the sampling directions bounded.
The sampling scale $d$ is restricted so that every
segment
$\bm{x}_0+td\bm{s}_j^{(m)}$, $0\le t\le1$,
remains inside $V$.
\end{assumption}

The analytic window and the bounds in
Asm.~\ref{ass:sc-linked-analytic-response} are uniform in the volume and in
the expansion order. We may therefore fix $\rho_0>0$, independent of
$\Lambda$, $d$, and $m$, such that all linked estimates below hold whenever
\begin{equation*}
  n\norm{\bm{x}-\bm{x}_0}^2\le\rho_0
\end{equation*}
and the corresponding interpolation segments stay inside that window.
\begin{lemma}
\label{lem:sc-linked-to-remainder}
Let $m=q+1\ge1$. At this fixed order, write
$K=K_m$, $c_j=c_j^{(m)}$, and
$\bm{s}_j=\bm{s}_j^{(m)}$ for the sampling parameters
in Asm.~\ref{ass:sc-linked-analytic-response}.

The remainder in
Eq.~\eqref{eq:sm-ground-Phi-remainder} is
\begin{equation}
  |R_{m,\Lambda}(d)\rangle
  =
  \sum_{j=1}^{K}
  \frac{c_j}{(m-1)!}
  \int_0^1
  (1-t)^{m-1}
  \left.
    \partial_{\bm{s}_j}^{\,m}
    |\psi_{0,\Lambda}(\bm{x})\rangle
  \right|_{\bm{x}=\bm{x}_0+td\bm{s}_j}
  \,dt.
  \label{eq:sc-order-m-remainder}
\end{equation}

Suppose that
\begin{equation*}
  nd^2S_*^2\le\rho_0
\end{equation*}
and that every sampling parameter
$\bm{x}_0+td\bm{s}_j$, $0\le t\le1$,
remains inside $V$.
Under Asm.~\ref{ass:sc-linked-analytic-response},
there exist positive constants $B_\psi$ and $B_E$
such that
\begin{align}
  \|R_{m,\Lambda}(d)\|
  &\le B_\psi^m n^{m/2},
  \label{eq:sc-order-uniform-remainder-norm}\\
  \|R_{m,\Lambda}(d)\|_{\widehat H_0}
  &\le B_E^m n^{m/2}.
  \label{eq:sc-order-uniform-remainder-energy}
\end{align}
The constants may depend on $\rho_0$ and the fixed
response and sampling bounds, but are independent
of $\Lambda$, $d$, and $m$.
\end{lemma}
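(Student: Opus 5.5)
The plan is to redo the proof of Lemma~\ref{lem:sm-ground-jet-bounds}, keeping explicit track of how every constant depends on the order $r$. The remainder formula \eqref{eq:sc-order-m-remainder} is just the integral Taylor remainder along each ray, combined with the moment conditions, exactly as in the derivation of Eq.~\eqref{eq:sm-ground-Phi-remainder}. By the triangle inequality for both the Hilbert norm and the energy seminorm,
\begin{equation*}
\|R_{m,\Lambda}(d)\|_\sharp\le\sum_j\frac{|c_j|}{m!}\sup_{0\le t\le 1}\bigl\|\partial^{\,m}_{\bm s_j}\psi_{0,\Lambda}(\bm x_0+td\bm s_j)\bigr\|_\sharp,
\end{equation*}
where $\sharp$ stands for either norm. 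Writing $\bm s_j=\|\bm s_j\|\hat{\bm v}$, we get $\partial^m_{\bm s_j}=\|\bm s_j\|^m\partial^m_{\hat{\bm v}}$. Using $\sum_j|c_j|\|\bm s_j\|^m\le C_{\rm st}^m$ from \eqref{eq:sc-stable-stencil}, it then suffices to show, for a unit direction $\hat{\bm v}$ and base points with $n\|\bm x-\bm x_0\|^2\le\rho_0$,
\begin{equation*}
\frac{1}{(m!)^2}\|\partial^m_{\hat{\bm v}}\psi\|^2\le B^{2m}n^m,
\end{equation*}
and the analogous bound in the $\widehat H_0$ seminorm. The factor $1/m!$ is essential here.

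The core step is a Cauchy-estimate version of the coefficient counting. First, $[u^mv^m]G_{\bm x}=\|\partial^m\psi\|^2/(m!)^2$. Next, $G=\exp L$ with $L(0,0)=0$, and the degree-one coefficients of $L$ vanish by the parallel gauge \eqref{eq:sm-ground-parallel-gauge}. For the remaining coefficients, \eqref{eq:sc-analytic-overlap-coefficients} says $L$ is majorized by $n\,\ell(u,v)$, where $\ell$ is the majorant series $M_{\rm an}\sum_{a+b\ge2}(u/R)^a(v/R)^b$. Hence $G\ll\exp(n\ell)$, and I would bound $[u^mv^m]\exp(n\ell)$ by Cauchy's inequality on the polydisk $|u|=|v|=\lambda R$. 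Choosing $\lambda=c/\sqrt n$ (with $c\le1/2$) makes $n\ell\le nM_{\rm an}C\lambda^2$, which is $O(1)$. The Cauchy estimate then gives $e^{O(1)}(\lambda R)^{-2m}=e^{O(1)}R^{-2m}n^m/c^{2m}$. This is exactly the form $B_\psi^{2m}n^m$.

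This choice of radius is the key trick. It replaces the combinatorial count of ``at most $r$ linked factors'' from Lemma~\ref{lem:sm-ground-jet-bounds} by a single geometric bound that is uniform in $m$. For large $n$ the radius $c/\sqrt n$ stays inside the analytic window automatically; for small $n$ one takes the radius $\min(c/\sqrt n,R/2)$ and absorbs the difference into $B$.

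For the energy seminorm, I would use $F=GJ$, so that $[u^mv^m]F=\|\partial^m\psi\|^2_{\widehat H_0}/(m!)^2$. The plan is to majorize $J$ by $|J(0,0)|+n\,j(u,v)$, where $j$ is built from \eqref{eq:sc-analytic-energy-coefficients}. The constant term is handled through Eq.~\eqref{eq:sm-ground-single-snapshot}, which gives $|J(0,0)|\le Cn\|\bm x-\bm x_0\|^2\le C\rho_0$. On the same polydisk, $n\,j=O(n\lambda)=O(\sqrt n)$, which is too large. Avoiding this is the main obstacle. Naively, the degree-one part of $J$ costs a factor $\sqrt n$, as in the proof of Lemma~\ref{lem:sm-ground-jet-bounds}, where it was rescued by $\delta\lesssim n^{-1/2}$.

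My first attempt would be to sharpen the degree-one coefficients of $J_{\bm x}$ exactly as in that lemma. They vanish at $\bm x_0$, and their $t$-derivative along the ray is again a linked coefficient of size at most $nM_{\rm an}R^{-2}$. Then, as in the lemma, the fundamental theorem of calculus gives $|[u^1v^0]J_{\bm x}|\le Cn\delta\le C\sqrt{\rho_0 n}$, since $\delta\le\sqrt{\rho_0/n}$. With radius $c/\sqrt n$ this contributes $O(1)$, and all degree-$\ge2$ terms contribute $O(n\lambda^2)=O(1)$. So the majorant of $J$ is $O(1)$ on the polydisk. The Cauchy estimate for the product $G\cdot J$ then yields $B_E^{2m}n^m$ with $B_E$ independent of $m$, $n$ and $d$. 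Multiplying by the stencil factor $C_{\rm st}^m$ gives \eqref{eq:sc-order-uniform-remainder-norm} and \eqref{eq:sc-order-uniform-remainder-energy}.
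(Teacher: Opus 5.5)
Your proposal is correct, but it reaches the order-uniform jet bounds by a different route from the paper's.

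\textbf{The paper's route.} The paper bounds $[u^av^b]G_{\bm x}$ by direct counting in $G=e^{L}$. Each $L$-factor carries at least two powers of $u,v$ and one factor of $n$, which gives $C_G^{a+b}n^{\lfloor (a+b)/2\rfloor}$. This keeps parity information: an odd-degree coefficient of $G$ carries only $n^{(a+b-1)/2}$. In $[u^mv^m](GJ)$, every nonconstant coefficient of $J$ (of size at most $nC^{a+b}$) multiplies a $G$-coefficient of degree at most $2m-1$, hence of size at most $n^{m-1}$. So the degree-one part of $J$ is harmless as it stands. Only the constant term needs the single-snapshot bound $|J_{\bm x}(0,0)|\le C\rho_0$.

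\textbf{Your route.} You majorize $G\ll\exp(n\ell)$ and apply a Cauchy estimate on the polydisk of radius $cR_{\rm an}/\sqrt n$. This makes the exponential-in-$m$ control rigorous in one line. It replaces the paper's informal remark that the number of ways to distribute the degrees grows at most exponentially. However, a uniform polydisk estimate does not see parity, so it loses a factor $\sqrt n$ on odd-degree coefficients. That is exactly why you need the extra input $|[u^1v^0]J_{\bm x}|\le Cn\delta\le C\sqrt{\rho_0 n}$.

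\textbf{The extra input is valid but should be spelled out.} It mirrors what the paper did in the fixed-order Lemma~\ref{lem:sm-ground-jet-bounds}. You should still justify the claim that the $t$-derivative is again a linked coefficient:
\begin{itemize}
\item Since $J_{\bm x_0+t\hat{\bm v}}(u,v)=J_{\bm x_0}(t+u,t+v)$, one has $\frac{d}{dt}[u^1v^0]J_{\bm x}=2[u^2v^0]J_{\bm x}+[u^1v^1]J_{\bm x}$.
\item Assumption~\ref{ass:sc-linked-analytic-response} bounds this by $3nM_{\rm an}R_{\rm an}^{-2}$ at every base point on the segment.
\item At the target, $[u^1v^0]J_{\bm x_0}=\langle\partial\psi|\widehat H_0|\psi_{0,\Lambda}(\bm x_0)\rangle=0$.
\end{itemize}

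\textbf{Trade-off.} Your version uses the coefficient bounds along the whole segment from $\bm x_0$ for the degree-one terms. The paper's counting needs only the pointwise bounds at the base point for those terms, at the price of a less explicit combinatorial step.

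\textbf{Minor remark.} Your caveat about small $n$ is unnecessary. With $c\le 1/2$ and $n\ge 1$, the radius $cR_{\rm an}/\sqrt n$ never exceeds $R_{\rm an}/2$, so the majorant series always converges there.
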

\begin{proof}
We first estimate derivatives along a unit direction
$\bm v$, at a base point $\bm{x}$ on the corresponding
radial path within $V$.
By Asm.~\ref{ass:sc-linked-analytic-response},
\begin{align*}
  \left|[u^av^b]L_{\bm{x}}\right|
  &\le n C^{a+b},
  && a+b\ge2,\\
  \left|[u^av^b]J_{\bm{x}}\right|
  &\le n C^{a+b},
  && a+b\ge1,
\end{align*}
for a constant $C$ independent of the volume,
the base point, the unit direction, and the
response orders.

In the parallel-transport gauge,
$L_{\bm{x}}(0,0)=0$ and the two degree-one
coefficients of $L_{\bm{x}}$ vanish.
Thus every factor of $L_{\bm{x}}$ in
\begin{equation*}
  G_{\bm{x}}
  =e^{L_{\bm{x}}}
  =1+L_{\bm{x}}+\frac{L_{\bm{x}}^2}{2!}+\cdots
\end{equation*}
contributes at least two powers of $u,v$ and
one factor of $n$.
The number of ways to distribute the powers of
$u,v$ among these factors grows at most exponentially
in the total degree; the factorial denominators
in the exponential expansion only reduce the bound.
Consequently, for $a+b\ge1$,
\begin{equation}
  \left|[u^av^b]G_{\bm{x}}\right|
  \le
  \begin{cases}
    C_G^{a+b}n^{(a+b)/2},
      & a+b \text{ even},\\
    C_G^{a+b}n^{(a+b-1)/2},
      & a+b \text{ odd},
  \end{cases}
  \label{eq:sc-overlap-coefficient-bound}
\end{equation}
where $C_G$ is independent of the volume and
the response order.
The constant coefficient is $G_{\bm{x}}(0,0)=1$.

Taking $a=b=m$ and using
\begin{equation*}
  \left.
    \partial_u^m\partial_v^mG_{\bm{x}}(u,v)
  \right|_{u=v=0}
  =
  (m!)^2[u^mv^m]G_{\bm{x}}
  =
  \|\partial_{\bm v}^{\,m}
    \psi_{0,\Lambda}(\bm{x})\|^2
\end{equation*}
gives the required Hilbert-space derivative bound
for unit directions.

We next estimate the energy-weighted derivatives
using
\begin{equation*}
  F_{\bm{x}}=G_{\bm{x}}J_{\bm{x}}.
\end{equation*}
Suppose that
$n\|\bm{x}-\bm{x}_0\|^2\le\rho_0$.
The single-sampled-state energy bound gives
\begin{equation*}
  |J_{\bm{x}}(0,0)|
  =
  \langle\widehat H_{0,\Lambda}\rangle_{\bm{x}}
  \le
  Cn\|\bm{x}-\bm{x}_0\|^2
  \le C\rho_0.
\end{equation*}
Its contribution to $[u^mv^m]F_{\bm{x}}$ is therefore
bounded by
\begin{equation*}
  |J_{\bm{x}}(0,0)|
  \left|[u^mv^m]G_{\bm{x}}\right|
  \le C\rho_0 C_G^{2m}n^m.
\end{equation*}

Every nonconstant coefficient of $J_{\bm{x}}$
contributes one factor of $n$.
The coefficient of $G_{\bm{x}}$ multiplying it
has total degree at most $2m-1$.
By Eq.~\eqref{eq:sc-overlap-coefficient-bound},
that coefficient contributes at most $n^{m-1}$,
apart from a factor exponential in its degree.
Thus each such product is bounded by
$C^{2m}n^m$, after enlarging $C$.
There are at most $(m+1)^2$ coefficient products
in $[u^mv^m](G_{\bm{x}}J_{\bm{x}})$.
Since $(m+1)^2\le4^m$ for $m\ge1$, their sum
can also be absorbed into an exponential bound:
\begin{equation*}
  \left|[u^mv^m]F_{\bm{x}}\right|
  \le C_F^{2m}n^m.
\end{equation*}
Here $C_F$ may depend on the fixed constant $\rho_0$,
but not on the volume or $m$.

Using
\begin{equation*}
  \left.
    \partial_u^m\partial_v^mF_{\bm{x}}(u,v)
  \right|_{u=v=0}
  =
  (m!)^2[u^mv^m]F_{\bm{x}}
  =
  \|\partial_{\bm v}^{\,m}
    \psi_{0,\Lambda}(\bm{x})\|_{\widehat H_0}^2
\end{equation*}
gives the energy-weighted derivative bound
for unit directions.
For a general direction, an $m$th directional
derivative acquires a factor $\|\bm v\|^m$.
We therefore obtain
\begin{align}
  \|\partial_{\bm v}^{\,m}
    \psi_{0,\Lambda}(\bm{x})\|
  &\le
  m! B_{\psi,0}^m n^{m/2}\|\bm v\|^m,
  \label{eq:sc-order-uniform-jet-norm}\\
  \|\partial_{\bm v}^{\,m}
    \psi_{0,\Lambda}(\bm{x})\|_{\widehat H_0}
  &\le
  m! B_{E,0}^m n^{m/2}\|\bm v\|^m,
  \label{eq:sc-order-uniform-jet-energy}
\end{align}
with constants independent of the volume and $m$.
The second bound applies when
$n\|\bm{x}-\bm{x}_0\|^2\le\rho_0$.

Along each sampling path, the lemma's assumptions give
\begin{equation*}
  n\|td\bm{s}_j\|^2
  \le nd^2S_*^2
  \le\rho_0,
  \qquad 0\le t\le1.
\end{equation*}
Thus both derivative bounds apply at every point
in the integral remainder.
Since
\begin{equation*}
  \frac{1}{(m-1)!}
  \int_0^1(1-t)^{m-1}\,dt
  =\frac{1}{m!},
\end{equation*}
the factorial in the derivative bounds cancels
the factorial from the Taylor remainder.
The triangle inequality then gives
\begin{align*}
  \|R_{m,\Lambda}(d)\|
  &\le
  B_{\psi,0}^m n^{m/2}
  \sum_{j=1}^{K}|c_j|\|\bm{s}_j\|^m,\\
  \|R_{m,\Lambda}(d)\|_{\widehat H_0}
  &\le
  B_{E,0}^m n^{m/2}
  \sum_{j=1}^{K}|c_j|\|\bm{s}_j\|^m.
\end{align*}
The sampling-coefficient bound
in Eq.~\eqref{eq:sc-stable-stencil} implies
\begin{equation*}
  \sum_{j=1}^{K}|c_j|\|\bm{s}_j\|^m
  \le C_{\rm st}^{\,m}.
\end{equation*}
Setting
$B_\psi=B_{\psi,0}C_{\rm st}$ and
$B_E=B_{E,0}C_{\rm st}$
proves
Eqs.~\eqref{eq:sc-order-uniform-remainder-norm}
and~\eqref{eq:sc-order-uniform-remainder-energy}.
\end{proof}

For each $m\ge1$, recall the sampled-state combination
\begin{equation*}
  |\Phi_{d,\Lambda}\rangle
  =
  \sum_{j=1}^{K_m}c_j^{(m)}
  |\psi_{0,\Lambda}
  (\bm{x}_0+d\bm{s}_j^{(m)})\rangle.
\end{equation*}
Its dependence on $m$ is suppressed in the notation.
For a nontrivial sampled pattern $S_*>0$; if $S_*=0$, every sampled point is the
target and the variational error vanishes. Fix
\begin{equation*}
  \rho
  =
  \min\left\{
  \rho_0,\frac{S_*^2}{4B_\psi^2}
  \right\}.
\end{equation*}
Then $\rho$ is independent of $\Lambda$, $d$, and $m$. In the near regime
$nd^2S_*^2\le\rho$, Eqs.~\eqref{eq:sm-ground-Phi-remainder} and
\eqref{eq:sc-order-uniform-remainder-norm} give, for every $m\ge1$,
\begin{align*}
  \norm{\Phi_{d,\Lambda}}
   & \ge
  1-d^m\norm{R_{m,\Lambda}(d)}      \\
   & \ge
  1-\left(B_\psi^2nd^2\right)^{m/2} \\
   & \ge
  1-\left(\frac{\rho B_\psi^2}{S_*^2}\right)^{m/2}
  \ge
  1-2^{-m}
  \ge
  \frac12 .
\end{align*}
Thus the denominator bound is a consequence of the order-uniform remainder
estimate rather than an additional stencil assumption.

\begin{theorem}
\label{thm:sc-sample-complexity}
Under the hypotheses of the ground-state prefactor
result and Asm.~\ref{ass:sc-linked-analytic-response},
there exist positive constants $A$ and $C$,
independent of $n$, $d$, and the response order $q$,
such that the constants in
Eq.~\eqref{eq:sc-fixed-order-input} can be chosen
to satisfy
\begin{equation}
  A_q\le AC^q.
  \label{eq:sc-Aq-exponential}
\end{equation}
Consequently, for each nonnegative integer $q$,
\begin{equation}
  0\le\varepsilon_0
  :=
  \frac{\widetilde E_0(d)-E_0(\bm{x}_0)}{n}
  \le
  Ad^2(Cnd^2)^q.
  \label{eq:sc-renormalized-bound}
\end{equation}
Here $\widetilde E_0(d)$ is the lowest Ritz energy
in the corresponding sampled variational space.
Its dependence on the response order and sampling
scheme is suppressed in the notation.
We restrict to $0<d\le d_0$, where $d_0>0$ is
independent of the volume and response order and
is chosen so that all sampling paths remain inside
$V$ and the single-sampled-state energy bound applies.

For $D\ge1$ parameter dimensions and $K\ge1$
sampling points, define
\begin{equation*}
  q_K
  =
  \max\left\{
    q\in\mathbb Z_{\ge0}:
    \binom{D+q}{q}\le K
  \right\},
\end{equation*}
where $\mathbb Z_{\ge0}$ denotes the nonnegative
integers.
For each $K$ under consideration, assume that the
sampling matrix
\begin{equation*}
  \mathsf V_{\bm\alpha,j}
  =
  \bm{s}_j^{\bm\alpha},
  \qquad
  |\bm\alpha|\le q_K,\quad j=1,\ldots,K,
\end{equation*}
has full row rank, and that the sampling coefficients
satisfy the moment conditions through order $q_K$
and the uniform sampling bounds in
Asm.~\ref{ass:sc-linked-analytic-response}.

If $Cnd^2<1$, then
\begin{equation}
  \varepsilon_0
  \le
  Ad^2
  \exp\!\left[
    -q_K|\log(Cnd^2)|
  \right].
  \label{eq:sc-exact-order-bound}
\end{equation}
For every $K\ge D+1$, this implies
\begin{equation}
  \varepsilon_0
  \le
  Ad^2
  \exp\!\left[
    -c_D|\log(Cnd^2)|K^{1/D}
  \right],
  \label{eq:sc-stretched-exponential}
\end{equation}
where $c_D>0$ depends only on $D$.

For any target energy-density accuracy $\varepsilon>0$,
a sufficient condition for
$\varepsilon_0\le\varepsilon$ is
\begin{equation}
  K
  \ge
  C_D
  \left[
    1+
    \max\left\{
      0,
      \frac{\log(Ad^2/\varepsilon)}
           {|\log(Cnd^2)|}
    \right\}
  \right]^D,
  \label{eq:sc-sample-complexity}
\end{equation}
provided the sampling scheme satisfies the conditions
above. The constant $C_D>0$ depends only on $D$.
\end{theorem}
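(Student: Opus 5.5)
The proof has three stages: an order-uniform version of the ground-state prefactor argument, a counting argument converting $q$ into $K$, and an inversion for the sample-complexity bound.

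\textbf{Stage 1: the bound $A_q\le AC^q$.} Fix $m=q+1$ and split into the near regime $nd^2S_*^2\le\rho$ and the far regime, with $\rho=\min\{\rho_0,S_*^2/(4B_\psi^2)\}$ as chosen above.

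In the near regime, Eq.~\eqref{eq:sm-ground-Phi-remainder} holds with $|\Phi_{d,\Lambda}\rangle$ built from the order-$m$ stencil. Lemma~\ref{lem:sc-linked-to-remainder} gives $\|R_{m,\Lambda}(d)\|_{\widehat H_0}\le B_E^m n^{m/2}$, and the computation preceding the theorem gives $\|\Phi_{d,\Lambda}\|\ge 1/2$. The variational estimate \eqref{eq:sm-ground-near-bound} then becomes
\[
\widetilde E_0-E_0\le 4B_E^{2m}n^md^{2m}.
\]
Dividing by $n$ gives $\varepsilon_0\le 4B_E^2\,d^2\,(B_E^2 n d^2)^q$.

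In the far regime $nd^2S_*^2>\rho$, I use a single sampled state and Eq.~\eqref{eq:sm-ground-single-snapshot}, with $S=S_*$ uniform in $m$. This gives $\varepsilon_0\le C_{S_*}d^2\le C_{S_*}d^2(S_*^2nd^2/\rho)^q$.

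Taking $A=\max\{4B_E^2,C_{S_*}\}$ and $C=\max\{B_E^2,S_*^2/\rho\}$ yields \eqref{eq:sc-Aq-exponential} and \eqref{eq:sc-renormalized-bound}. Both constants are independent of $n$, $d$ and $q$ because $\rho$, $B_E$ and $S_*$ are.

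\textbf{Stage 2: from $q$ to $K$.} Full row rank of $\mathsf V$ guarantees that the moment system \eqref{eq:sm-low-moment} through order $q_K$ is solvable with $K$ points. By hypothesis the resulting coefficients obey the stable-stencil bounds, so Stage~1 applies with $q=q_K$. When $Cnd^2<1$, writing $(Cnd^2)^{q_K}=\exp[-q_K|\log(Cnd^2)|]$ gives \eqref{eq:sc-exact-order-bound}.

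For \eqref{eq:sc-stretched-exponential} I need $q_K\ge c_DK^{1/D}$ when $K\ge D+1$. This holds because $\binom{D+q}{q}\le (q+1)^D$. Maximality of $q_K$ then gives
\[
K<\binom{D+q_K+1}{q_K+1}\le (q_K+2)^D,
\]
so $q_K> K^{1/D}-2$. Since $K\ge D+1$ forces $q_K\ge1$, I obtain $q_K\ge c_DK^{1/D}$ with, for example, $c_D=1/3$ after handling small $K$ separately. Because $q_K\ge1$ there, a suitable $c_D$ depending only on $D$ always exists.

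\textbf{Stage 3: sample complexity.} It suffices to choose $K$ so that $q_K\ge\log(Ad^2/\varepsilon)/|\log(Cnd^2)|$. Since $\binom{D+q}{q}\le(q+1)^D$, any $K\ge(q+1)^D$ achieves order $q$. Taking $q=\lceil\max\{0,\log(Ad^2/\varepsilon)/|\log(Cnd^2)|\}\rceil$ gives \eqref{eq:sc-sample-complexity} with $C_D=2^D$, after absorbing the ceiling into the factor $1+\cdot$. This works provided the stencil hypotheses hold for that $K$, which the statement assumes.

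\textbf{Main obstacle.} The delicate point is Stage~1: ensuring that no $m$-dependence hides in the constants. Two places need checking. The first is the denominator bound, which is handled by the choice of $\rho$ via $B_\psi$. The second is the far-regime constant $C_{S_*}$, which needs the uniform bound $S_*$ on the sampling directions. Everything else is bookkeeping of exponentials.
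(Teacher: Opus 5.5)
Your proposal is correct and follows the paper's proof essentially step for step: the same near/far split with $\rho=\min\{\rho_0,S_*^2/(4B_\psi^2)\}$, the same constants $A$ and $C$ (the paper also puts $1$ into each maximum, which is harmless but not needed), and the same substitution $q=q_K$ followed by inversion. The only difference is in the counting step: your bound $\binom{D+q}{q}\le(q+1)^D$ gives $c_D=1/3$ (via $q_K\ge\max\{1,K^{1/D}-2\}\ge K^{1/D}/3$) and $C_D=2^D$, whereas the paper obtains $c_D=\binom{D+2}{2}^{-1/D}$ and $C_D=(D+1)^D/D!$; both sets of constants are valid.
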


\begin{proof}
If $S_*=0$, every sampled parameter is the target point
and the variational error vanishes.
We therefore assume $S_*>0$ and use the fixed
constant $\rho$ chosen above.

Let $m=q+1$.
First suppose that $nd^2S_*^2\le\rho$.
By Eq.~\eqref{eq:sm-ground-Phi-remainder},
\begin{equation*}
  |\Phi_{d,\Lambda}\rangle
  =
  |\psi_{0,\Lambda}(\bm{x}_0)\rangle
  +
  d^m|R_{m,\Lambda}(d)\rangle.
\end{equation*}
Since
\begin{equation*}
  \widehat H_{0,\Lambda}
  |\psi_{0,\Lambda}(\bm{x}_0)\rangle
  =0,
\end{equation*}
the variational principle, the bound
$\|\Phi_{d,\Lambda}\|\ge1/2$, and
Lemma~\ref{lem:sc-linked-to-remainder} give
\begin{align}
  \varepsilon_0
  &\le
  \frac{d^{2m}
    \|R_{m,\Lambda}(d)\|_{\widehat H_0}^2}
       {n\|\Phi_{d,\Lambda}\|^2}
  \notag\\
  &\le
  4B_E^{2m}n^{m-1}d^{2m}.
  \label{eq:sc-remainder-to-energy}
\end{align}
Using $m=q+1$, we obtain
\begin{equation*}
  \varepsilon_0
  \le
  4B_E^2d^2(B_E^2nd^2)^q.
\end{equation*}

Now suppose that $nd^2S_*^2>\rho$.
Any normalized sampled ground state is an admissible
trial state. The single-sampled-state energy bound
therefore gives
\begin{equation*}
  \varepsilon_0\le C_{\rm far}d^2,
\end{equation*}
where $C_{\rm far}$ is independent of the volume,
$d$, and $q$, because all sampling directions
have norm at most $S_*$.
Since $nd^2S_*^2>\rho$,
\begin{equation*}
  d^2
  \le
  \left(\frac{S_*^2}{\rho}\right)^q
  n^qd^{2(q+1)}
\end{equation*}
for every nonnegative integer $q$.
Hence
\begin{equation*}
  \varepsilon_0
  \le
  C_{\rm far}
  \left(\frac{S_*^2}{\rho}\right)^q
  n^qd^{2(q+1)}.
\end{equation*}

Combining the two regimes, we may choose
\begin{equation*}
  A=\max\{1,4B_E^2,C_{\rm far}\},
  \qquad
  C=\max\left\{
    1,B_E^2,\frac{S_*^2}{\rho}
  \right\}.
\end{equation*}
These constants are independent of $n$, $d$, and $q$,
and give
\begin{equation*}
  A_q\le AC^q,
  \qquad
  \varepsilon_0\le Ad^2(Cnd^2)^q.
\end{equation*}
This proves
Eqs.~\eqref{eq:sc-Aq-exponential}
and~\eqref{eq:sc-renormalized-bound}.

We next relate the response order to the number
of sampling points.
The number of monomials in $D$ variables with
total degree at most $q$ is
\begin{equation*}
  N_q=\binom{D+q}{q}.
\end{equation*}
Thus the sampling matrix has $N_q$ rows and $K$
columns. Full row rank requires $K\ge N_q$ and
ensures that the moment conditions have a solution.
The assumed sampling bounds additionally control
the size of the chosen coefficients.

For a sampling scheme satisfying these conditions
through order $q_K$, we can set $q=q_K$ in
Eq.~\eqref{eq:sc-renormalized-bound}.
When $Cnd^2<1$, this gives
\begin{equation*}
  \varepsilon_0
  \le
  Ad^2(Cnd^2)^{q_K}
  =
  Ad^2
  \exp\!\left[
    -q_K|\log(Cnd^2)|
  \right],
\end{equation*}
proving Eq.~\eqref{eq:sc-exact-order-bound}.

 For $K\ge D+1$, the definition of $q_K$ gives
$q_K\ge1$ and
\begin{equation*}
  K<\binom{D+q_K+1}{D}.
\end{equation*}
For every integer $q\ge1$,
\begin{equation*}
  \binom{D+q+1}{D}
  =
  \prod_{j=1}^{D}\frac{q+j+1}{j}
  \le
  \binom{D+2}{2}q^D.
\end{equation*}
Consequently,
\begin{equation*}
  q_K\ge c_DK^{1/D},
  \qquad
  c_D=\binom{D+2}{2}^{-1/D}.
\end{equation*}
This estimate holds for every $K\ge D+1$.
Substituting it into
Eq.~\eqref{eq:sc-exact-order-bound}
proves Eq.~\eqref{eq:sc-stretched-exponential}.

Finally, let $\varepsilon>0$, and choose the smallest
nonnegative integer $q$ satisfying
\begin{equation*}
  q\ge
  \frac{\log(Ad^2/\varepsilon)}
       {|\log(Cnd^2)|}.
\end{equation*}
Then $Ad^2(Cnd^2)^q\le\varepsilon$ and
\begin{equation*}
  q\le
  1+
  \max\left\{
    0,
    \frac{\log(Ad^2/\varepsilon)}
         {|\log(Cnd^2)|}
  \right\}.
\end{equation*}
Moreover,
\begin{align*}
  N_q
  &=\frac{(q+1)\cdots(q+D)}{D!}\\
  &\le\frac{(q+D)^D}{D!}\\
  &\le
  \frac{(D+1)^D}{D!}
  \left[
    1+
    \max\left\{
      0,
      \frac{\log(Ad^2/\varepsilon)}
           {|\log(Cnd^2)|}
    \right\}
  \right]^D.
\end{align*}
Thus Eq.~\eqref{eq:sc-sample-complexity} holds,
for example, with
\begin{equation*}
  C_D=\frac{(D+1)^D}{D!}.
\end{equation*}
Indeed, this condition ensures $K\ge N_q$ and
therefore $q_K\ge q$.
For an admissible sampling scheme,
\begin{equation*}
  \varepsilon_0
  \le Ad^2(Cnd^2)^{q_K}
  \le Ad^2(Cnd^2)^q
  \le\varepsilon.
\end{equation*}
This proves the theorem.
\end{proof}

\section{Dimerized chain numerical simulation}
\label{sec:sm-numerics}

\subsection{Model and benchmark setting}
\label{sec:sm-model}

We test the scaling laws discussed in the main text using a spin-$\frac{1}{2}$ Heisenberg chain with open boundary conditions.
The Hamiltonian is
\begin{equation*}
  H(\bm x)
  =
  \sum_{i=1}^{n-1}
  \mathcal J_i\,
  \mathbf S_i\cdot\mathbf S_{i+1},
  \qquad
  \mathbf S_i=(S_i^x,S_i^y,S_i^z).
\end{equation*}
The nearest-neighbor couplings have a period-four modulation,
\begin{equation*}
  \mathcal J_i=J_a,
  \qquad
  a=1+[(i-1)\bmod 4].
\end{equation*}
Throughout the simulations we fix
\begin{equation*}
  J_1=J_3=1
\end{equation*}
and use
\begin{equation*}
  \bm x=(J_2,J_4)
\end{equation*}
as a two-dimensional control parameter.
The target point used in the benchmark calculations is
\begin{equation*}
  \bm x_0=(0.5,0.4),
\end{equation*}
corresponding to the coupling pattern
\begin{equation*}
  (J_1,J_2,J_3,J_4)=(1,0.5,1,0.4).
\end{equation*}
The calculations are performed in the regime $J_2,J_4<1$.
In this regime the chain is in a dimerized valence-bond-crystal phase with a finite bulk excitation gap and exponentially decaying correlations.
It therefore provides a controlled setting for testing the local Taylor scaling of wavefunction subspaces.

\subsection{Sampling basis}
\label{sec:sm-sampling-basis}

For a given sampling distance $d$, the sampled parameter points are written as
\begin{equation*}
  \bm x_j=\bm x_0+d\,\bm s_j,
  \qquad
  j=1,\ldots,K,
\end{equation*}
where $\bm s_j$ are dimensionless sample directions.
At each $\bm x_j$, the reference many-body states are computed independently by DMRG.

For the ground-state benchmark, the sampled variational space is
\begin{equation*}
  \mathcal B_{1,K}(d)
  =
  \operatorname{span}
  \bigl\{
  \lvert\psi_0(\bm x_j)\rangle
  \bigr\}_{j=1}^{K}.
\end{equation*}
For the low-energy benchmark, we keep the two lowest singlet states at each sampling point and use the pooled space
\begin{equation*}
  \mathcal B_{2,K}(d)
  =
  \operatorname{span}
  \bigl\{
  \lvert\psi_0(\bm x_j)\rangle,
  \lvert\psi_1(\bm x_j)\rangle
  \bigr\}_{j=1}^{K}.
\end{equation*}
All states in the low-energy benchmark are computed in the total-spin $S=0$ sector.
This restriction removes the triplet sector from the projected problem and isolates the scaling of a low-energy singlet subspace.

\subsection{Generation of Taylor stencils}
\label{sec:sm-stencil-generation}

The sampled directions $\bm s_j$ are chosen so that the sampled states can reproduce the Taylor expansion of the target wavefunction subspace up to a prescribed degree.
For a $q$ response order in $D$ parameter dimensions, define the multi-index set
\begin{equation*}
  \mathcal A_q
  =
  \left\{
  \bm\alpha=(\alpha_1,\ldots,\alpha_D)\in\mathbb N_0^D:
  |\bm\alpha|\le q
  \right\}.
\end{equation*}
Its cardinality is
\begin{equation*}
  N_q=\binom{D+q}{q}.
\end{equation*}
The minimal stencil uses $K=N_q$ points.
For a candidate sampled directions $\{\bm s_j\}_{j=1}^{K}$, we form the multivariate Vandermonde matrix
\begin{equation*}
  V_{\bm\alpha j}
  =
  \bm s_j^{\bm\alpha}
  =
  \prod_{\mu=1}^{D}
  (s_{j,\mu})^{\alpha_\mu},
  \qquad
  \bm\alpha\in\mathcal A_q .
\end{equation*}
The corresponding moment coefficients $c_j$ satisfy
\begin{equation*}
  \sum_{j=1}^{K}c_j\bm s_j^{\bm\alpha}
  =
  \delta_{\bm\alpha,\bm 0},
  \qquad
  |\bm\alpha|\le q .
\end{equation*}
Equivalently,
\begin{equation*}
  Vc=e_{\bm 0},
\end{equation*}
where $e_{\bm 0}$ denotes the unit vector associated with the constant monomial.
These moment conditions cancel all Taylor terms up to degree $q$ and leave a wavefunction-subspace error of order $d^{q+1}$.

In the actual projected-subspace calculation, the final variational amplitudes are obtained from Ritz diagonalization.
The coefficients $c_j$ are not used as the final variational coefficients.
They are used to certify and diagnose that the sampling geometry has the desired Taylor resolution.

The sampled pattern search is performed over integer candidate points
\begin{equation*}
  \bm s\in[-R,R]^D\cap\mathbb Z^D,
\end{equation*}
with the origin excluded.
Excluding $\bm s=\bm 0$ ensures that the target state itself is not included among the sampled basis states.
For each candidate subset, we compute $V$ and require
\begin{equation*}
  \operatorname{rank}V=N_q .
\end{equation*}
Among the full-rank sampled patterns, we choose the one minimizing
\begin{equation*}
  \mathcal S
  =
  \kappa(V)
  \left(
  1+0.01\sum_{j=1}^{K}|c_j|
  \right),
\end{equation*}
where $\kappa(V)$ is the condition number of the Vandermonde matrix.
The condition number controls the numerical stability of the moment constraints.
The coefficient norm weakly penalizes stencils that rely on large cancellations.
When two sampled patterns have indistinguishable scores within the numerical tolerance, we choose the one with the smaller maximal radius
\begin{equation*}
  R_s=\max_j \|\bm s_j\|.
\end{equation*}
If this still does not resolve the tie, we choose the stencil with the smaller value of $\sum_j |c_j|$.

For comparisons between different response orders, we also use nested sampled patterns.
A $q'$ response order with $q'>q$ is constructed by keeping all points of the degree-$q$ sampled patterns and adding new integer-grid points until the total number reaches
\begin{equation*}
  N_{q'}=\binom{D+q'}{q'}.
\end{equation*}
The enlarged sampled pattern is again required to have full Vandermonde rank for all monomials with $|\bm\alpha|\le q'$.
The same conditioning criterion is then applied to select the added points.
This construction keeps the lower-order sampling geometry fixed while increasing the Taylor resolution.

For overcomplete tests with $K>N_q$, the Vandermonde matrix has size $N_q\times K$.
We require full row rank and choose the minimum-norm solution of the moment equations,
\begin{equation*}
  c
  =
  V^{T}
  (VV^{T})^{-1}
  e_{\bm 0}.
\end{equation*}
For every sampled pattern, we record the rank, singular values, condition number, moment residuals, coefficient norms, and maximal radius.

\subsection{DMRG reference states}
\label{sec:sm-dmrg-reference}

The reference states are computed using density matrix renormalization group calculations with explicit $\mathrm{SU}(2)$ spin-rotation symmetry.
The matrix product states are represented in symmetry-adapted form and are labeled by the total spin quantum number.
This implementation preserves total spin exactly within the numerical representation and reduces the effective bond dimension required to reach a fixed accuracy.

The ground state at each sampling point is obtained by standard variational DMRG in the $S=0$ sector.
The first excited singlet is obtained by a deflated DMRG calculation.
After the ground state $\lvert\psi_0\rangle$ has converged at a given parameter point, we minimize the shifted Hamiltonian
\begin{equation*}
  H_{\mathrm{eff}}
  =
  H
  +
  \lambda
  \lvert\psi_0\rangle\langle\psi_0\rvert
\end{equation*}
within the same $S=0$ sector.
We use
\begin{equation*}
  \lambda=5
\end{equation*}
for all system sizes and parameter values.
The penalty term shifts the previously obtained ground state upward in energy, so that the lowest state of $H_{\mathrm{eff}}$ gives the first singlet excitation of the original Hamiltonian $H$.
The resulting state is then used as $\lvert\psi_1\rangle$ in the low-energy projected-subspace calculation.

The DMRG bond dimension is increased adaptively during the sweeps.
For all reference states used in the benchmarks, the discarded weight is reduced below $10^{-16}$.
At this level, the remaining DMRG truncation error is much smaller than the projection errors studied below.
The dominant numerical errors are associated with the finite sampling distance and with the conditioning of the projected overlap matrix.

\subsection{Projected-subspace diagonalization}
\label{sec:sm-projected-diagonalization}

After the sampling basis has been constructed, the target Hamiltonian is projected onto the nonorthogonal subspace.
Let
\begin{equation*}
  \mathcal B
  =
  \operatorname{span}
  \bigl\{
  \lvert\phi_a\rangle
  \bigr\}_{a=1}^{N_B}
\end{equation*}
denote either $\mathcal B_{1,K}$ or $\mathcal B_{2,K}$.
Here $\lvert\phi_a\rangle$ denotes the pooled set of retained snapshot states.
At the target parameter $\bm x_0$, we form
\begin{equation*}
  S_{ab}
  =
  \langle\phi_a|\phi_b\rangle,
  \qquad
  H_{ab}
  =
  \langle\phi_a|H(\bm x_0)|\phi_b\rangle .
\end{equation*}
The variational energies are obtained from the generalized eigenvalue problem
\begin{equation*}
  Hc=ESc .
\end{equation*}

Since nearby reference states can be nearly linearly dependent, the overlap matrix is orthogonalized before solving the projected problem.
We diagonalize
\begin{equation*}
  S=U\Lambda U^\dagger,
\end{equation*}
where
\begin{equation*}
  \Lambda=\operatorname{diag}(\lambda_1,\ldots,\lambda_{N_B}).
\end{equation*}
Only eigenmodes satisfying
\begin{equation*}
  \lambda_\nu
  \ge
  \epsilon_{\mathrm{svd}}\lambda_{\max}
\end{equation*}
are retained.
We use $\epsilon_{\mathrm{svd}}=3\times 10^{-7}$.
Let $U_{\mathrm r}$ be the matrix of retained eigenvectors and let $\Lambda_{\mathrm r}$ be the corresponding diagonal matrix of retained eigenvalues.
The singular-value spectra of the projected overlap matrices are shown in Fig.~\ref{fig:svd-spectrum}.
For all system sizes and sampling distances used in the benchmark,
the retained singular directions remain well separated from the cutoff.
This confirms that the projected Ritz calculation is not controlled
by numerical null directions of the snapshot basis.

The orthonormalized projected Hamiltonian is
\begin{equation*}
  \widetilde H
  =
  \Lambda_{\mathrm r}^{-1/2}
  U_{\mathrm r}^\dagger
  H
  U_{\mathrm r}
  \Lambda_{\mathrm r}^{-1/2}.
\end{equation*}
We then solve the ordinary Ritz problem
\begin{equation*}
  \widetilde H\widetilde c
  =
  E\widetilde c .
\end{equation*}
The eigenvalues are the variational estimates of the low-energy spectrum at $\bm x_0$.
The corresponding coefficient vector in the original nonorthogonal basis is
\begin{equation*}
  c
  =
  U_{\mathrm r}
  \Lambda_{\mathrm r}^{-1/2}
  \widetilde c .
\end{equation*}
The overlap truncation removes numerically ill-conditioned directions while keeping the Ritz calculation variational within the retained subspace.
\begin{figure}
  \centering
  \includegraphics[width=\textwidth]{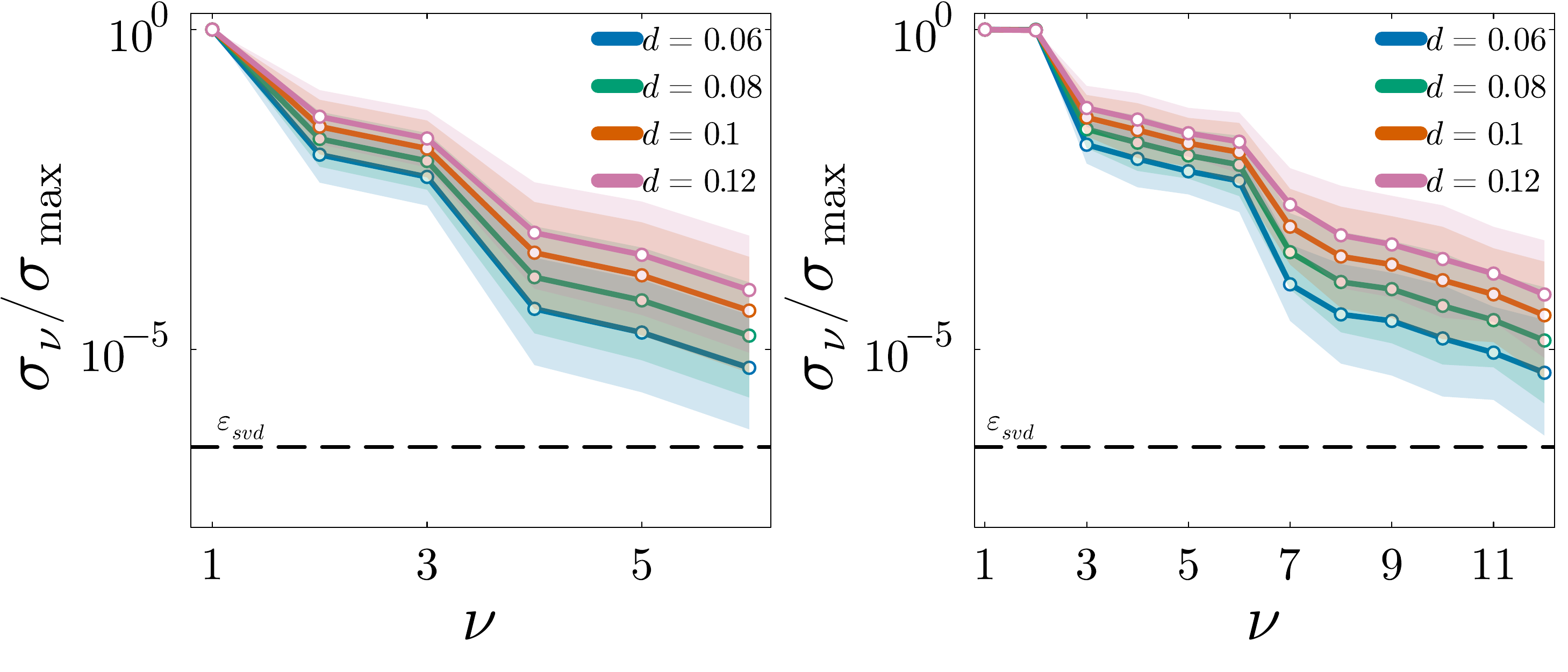}
  \caption{
    Singular-value spectra of the projected overlap matrix at fixed Taylor degree $q=1$.
    (a) Ground-state sampling basis.
    (b) Low-energy singlet sampling basis.
    For each sampling distance $d$, the solid line shows the median over the benchmark system sizes, and the shaded region indicates the full range over $n=36,64,128,256$.
    The dashed horizontal line marks the cutoff $\epsilon_{\mathrm{svd}}$ used in the overlap-matrix truncation.
  }
  \label{fig:svd-spectrum}
\end{figure}

\subsection{Comparison with random sampling}
\label{sec:sm-random-sampling}

To test the role of the sampling geometry, we compare the sampled patterns used in the main benchmark with a random sampling baseline.
For a fixed target point $\bm x_0$, sampling distance $d$, and Taylor degree $q$, the deterministic  pattern uses
\begin{equation*}
  \bm x_j=\bm x_0+d\,\bm s_j,
  \qquad
  j=1,\ldots,K,
\end{equation*}
where the dimensionless points $\bm s_j$ satisfy the Vandermonde moment conditions described in Sec.~\ref{sec:sm-stencil-generation}.
The number of sampling points is fixed to
\begin{equation*}
  K=N_q=\binom{D+q}{q}.
\end{equation*}

For the random baseline, we use the same number of sampling points and draw the dimensionless vectors $\bm s_j$ independently from an annulus,
\begin{equation*}
  r_1\le \|\bm s_j\|\le r_2 .
\end{equation*}
The corresponding physical sampling points are again $\bm x_j=\bm x_0+d\,\bm s_j$.
For each value of $d$, we generate 50 independent random batches.
For every random batch, we compute the sampled wavefunctions, construct the projected overlap and Hamiltonian matrices, apply the same overlap-matrix truncation, and solve the genelized eigenvalue problem.
The resulting distribution of energy-density errors is then compared with the error obtained from the conditioned Taylor stencil selected by the procedure in Sec.~\ref{sec:sm-stencil-generation}. The results are shown in Fig.~\ref{fig:random-sampling}.
\begin{figure}
  \centering
  \includegraphics[width=0.6\textwidth]{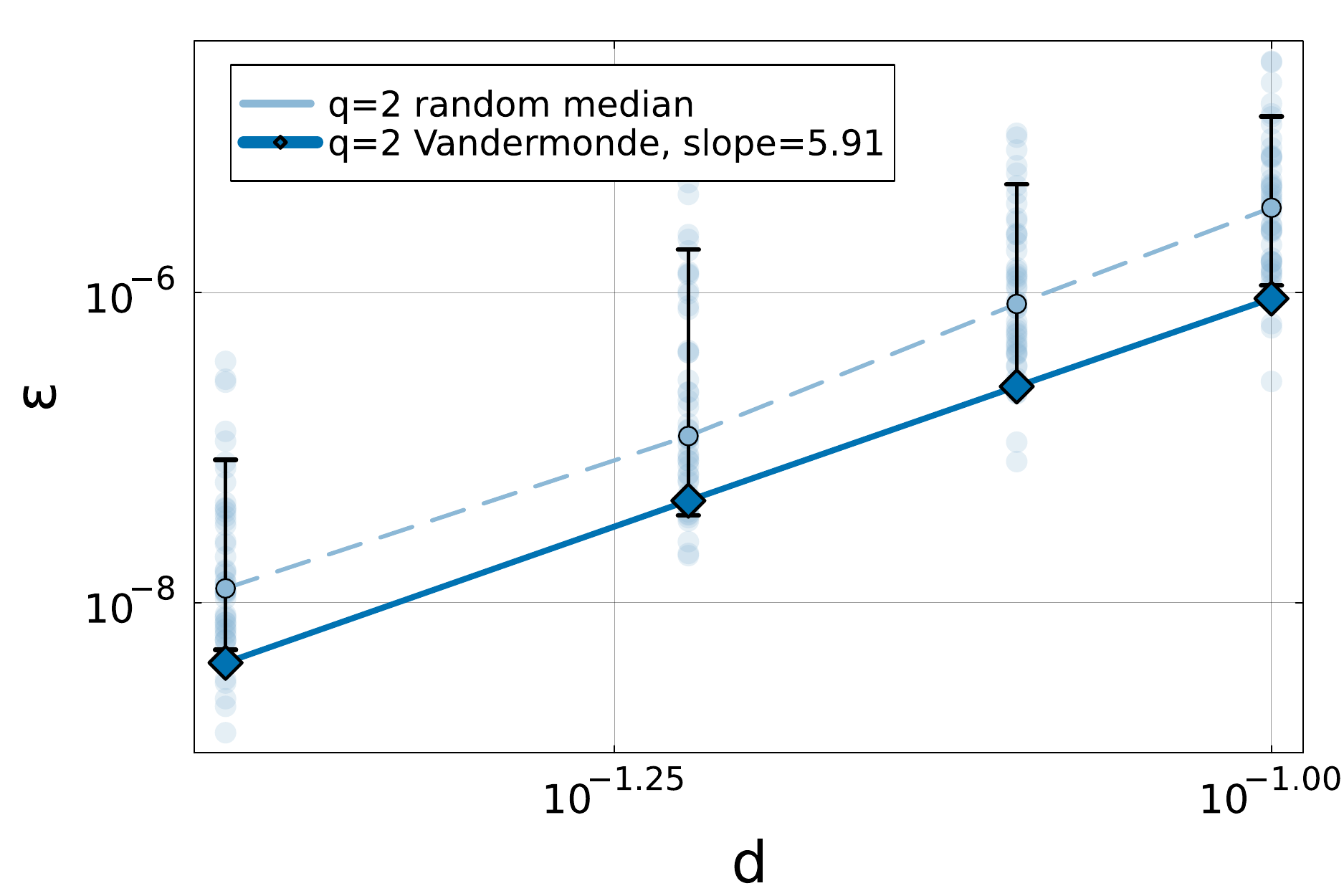}
  \caption{    Comparison between the conditioned Taylor stencil and random sampling
    at fixed sampled number.
    For each sampling distance $d$, the random baseline uses 50 independent
    batches of dimensionless sampling directions with the same number of
    points as the deterministic degree-$q$ stencil.
    The plotted errors are the resulting projected-subspace energy-density
    errors, computed with the same DMRG reference states, overlap truncation,
    and Ritz diagonalization protocol as in the deterministic calculation.
    The comparison isolates the effect of enforcing the Taylor moment
    conditions from the generic benefit of adding nearby sampled states.
  }
  \label{fig:random-sampling}
\end{figure}

\subsection{Error measures and diagnostics}
\label{sec:sm-error-diagnostics}

The projected-subspace energy error for the $a$th retained level is measured by the energy-density difference
\begin{equation}
  \varepsilon_a(d,n)
  =
  \frac{
    \widetilde E_a(d,n)-E_a(\bm x_0,n)
  }{n},
  \qquad
  a=0,1.
\end{equation}
Here $\widetilde E_a(d,n)$ is the Ritz energy obtained from the sampled basis at distance $d$, while $E_a(\bm x_0,n)$ is the reference DMRG energy at the target point.
For the ground-state benchmark, only $a=0$ is retained.
For the low-energy excited singlet benchmark, $a=0$ and $a=1$ denote the two lowest states in the $S=0$ sector.

For each projected calculation, we record the spectrum of the overlap matrix $S$, the number of retained directions after the cutoff, and the final Ritz energies.
For each stencil, we record the Vandermonde condition number, the rank, the moment residual, and the coefficient norms.
For each DMRG state, we record the discarded weight and convergence of the variational energy.
These diagnostics ensure that the observed scaling is controlled by the sampling construction rather than by DMRG truncation error or by an ill-conditioned projected basis.

As an additional check of the distance-scaling prediction beyond the
$q=1$ data shown in the main text, Fig.~\ref{fig:d-scaling-q0}
shows the corresponding results for $q=0$ and $q=2$.
The fitted slopes are close to the expected values $2(q+1)=2$
and $2(q+1)=6$, respectively, for both the ground-state and
low-energy excited singlet errors.
\begin{figure}[H]
  \centering
  \includegraphics[width=0.6\textwidth]{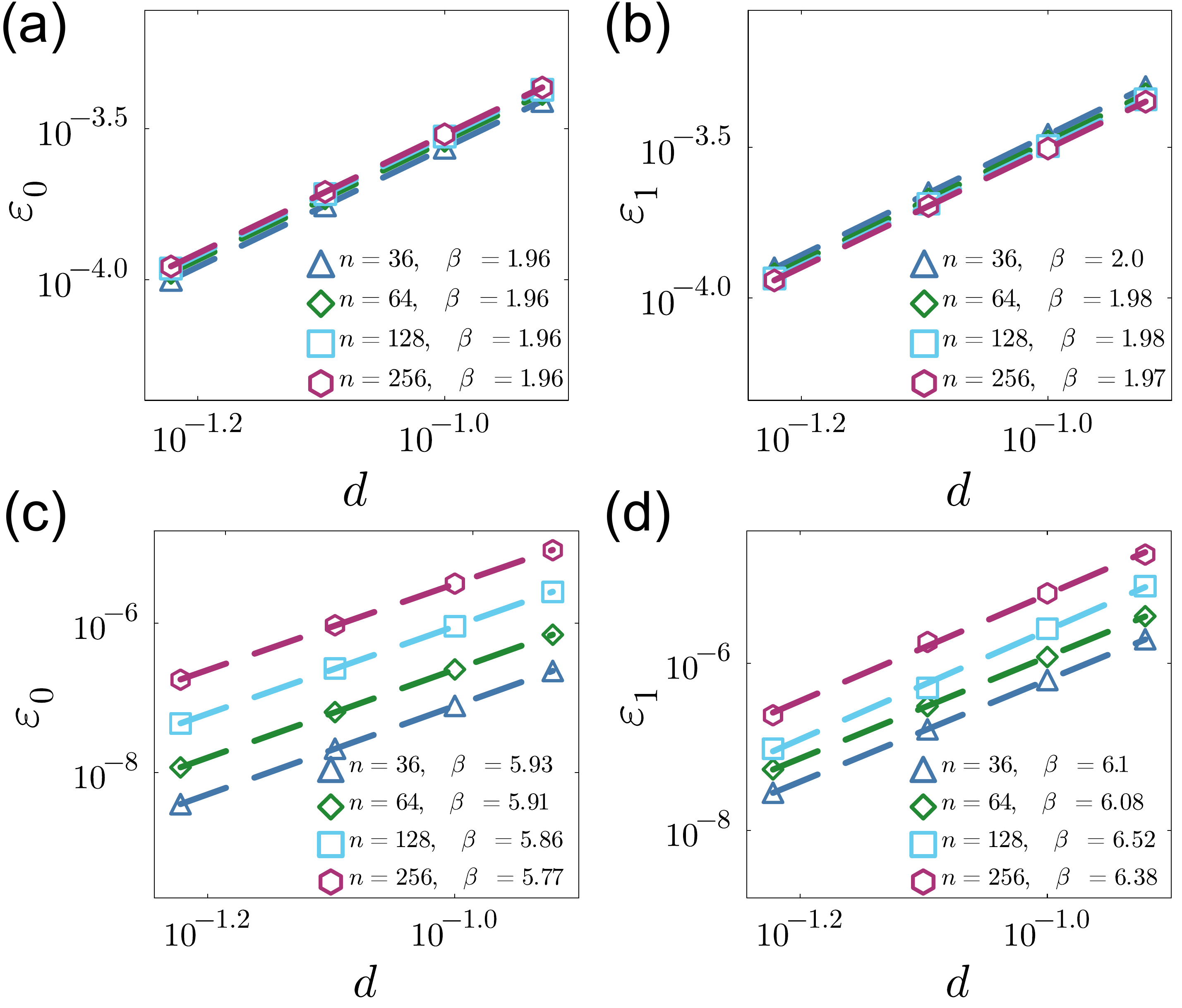}
  \caption{Ground state error $\varepsilon_0$ and lowest singlet error $\varepsilon_1$
    versus the sampling distance $d$ for several system sizes $n$. Dashed lines are power law fits.
    (a), (b) At order $q=0$, giving exponents close to the predicted value $2(q+1)=2$.
    (c), (d) At order $q=2$, giving exponents close to the predicted value $2(q+1)=6$.
  }
  \label{fig:d-scaling-q0}
\end{figure}

\section{Transverse-field Ising-chain numerical simulation}
\label{sec:sm-tfim}

\subsection{Model and retained low-energy sector}
\label{sec:sm-tfim-model}

To examine the crossover near a continuous quantum phase transition, we consider the transverse-field Ising chain with open boundary conditions,
\begin{equation*}
  H(g)
  =
  -\sum_{i=1}^{n-1}\sigma_i^z\sigma_{i+1}^z
  -g\sum_{i=1}^{n}\sigma_i^x .
\end{equation*}
The thermodynamic critical point is at
\begin{equation*}
  g_c=1.
\end{equation*}
For each finite chain, we retain the two lowest eigenstates,
\begin{equation*}
  \mathcal L(g)
  =
  \operatorname{span}
  \left\{
  \lvert\psi_0(g)\rangle,
  \lvert\psi_1(g)\rangle
  \right\},
\end{equation*}
and therefore set $M=2$.
On the ferromagnetic side, these states form the finite-size Ising doublet.
The external gap protecting the retained sector is
\begin{equation*}
  \Delta_{\mathrm{ext}}(g,n)
  =
  E_2(g,n)-E_1(g,n),
\end{equation*}
where $E_2$ is the third many-body level.
This choice treats the nearly degenerate doublet as a single spectral subspace and avoids following either member of the doublet separately.

At every sampling point, the two lowest states are computed by DMRG,
with the first excited state obtained using the same deflation procedure
as in the dimerized-chain calculation.
The convergence threshold, and
projected subspace diagonalization are the same as those used in
Secs.~\ref{sec:sm-dmrg-reference}
and~\ref{sec:sm-projected-diagonalization}.
No symmetry is imposed in the MPS calculations.
The target states at $g_0$ are used only to obtain benchmark energies
and are not included in the sampled variational basis.

\subsection{Symmetric one-dimensional sampled pattern}
\label{sec:sm-tfim-stencil}

The transverse field $g$ is the only control parameter, so the parameter-space dimension is $D=1$.
We use the degree-$q=1$ construction throughout the critical-crossover calculation.
The minimal sample then contains
\begin{equation*}
  K=N_q=\binom{D+q}{q}=2
\end{equation*}
sampling parameters.
Unlike the two-dimensional integer sampled pattern used for the dimerized chain, the one-dimensional pattern is chosen symmetrically about the target field,
\begin{equation*}
  s_-=-1,
  \qquad
  s_+=1,
\end{equation*}
which gives the physical sampled parameter
\begin{equation*}
  g_-=g_0-d,
  \qquad
  g_+=g_0+d.
\end{equation*}
For the monomial basis $\{1,s\}$, the Vandermonde matrix is
\begin{equation*}
  V
  =
  \begin{pmatrix}
    1  & 1 \\
    -1 & 1
  \end{pmatrix}.
\end{equation*}
It is nonsingular, and the moment equations are solved by
\begin{equation*}
  c_-=c_+=\frac{1}{2}.
\end{equation*}
Consequently,
\begin{equation*}
  c_-+c_+=1,
  \qquad
  c_-s_-+c_+s_+=0,
\end{equation*}
so the constant term is reproduced and the linear response order is  exactly.

The sampled low-energy trial space is therefore
\begin{equation*}
  \mathcal B_{2,2}(d;g_0)
  =
  \operatorname{span}
  \left\{
  \lvert\psi_a(g_0-d)\rangle,
  \lvert\psi_a(g_0+d)\rangle
  \,\middle|\,
  a=0,1
  \right\}.
\end{equation*}
Before overlap truncation, this space contains four snapshot states.
The overlap and projected Hamiltonian matrices are evaluated with the target Hamiltonian $H(g_0)$.
We use the same relative overlap cutoff,
\begin{equation*}
  \epsilon_{\mathrm{svd}}=3\times 10^{-7},
\end{equation*}
as in the dimerized-chain benchmark.
\subsection{Extraction of the effective convergence exponent}
\label{sec:sm-tfim-exponent}

For each target field $g_0$, system size $n$, and sampling distance $d$,
the projected calculation gives two Ritz energies
$\widetilde E_a(d,n;g_0)$ with $a=0,1$.
Their energy-density errors are
\begin{equation*}
  \varepsilon_a(d,n;g_0)
  =
  \frac{
    \widetilde E_a(d,n;g_0)-E_a(g_0,n)
  }{n}.
\end{equation*}
For each retained level, we extract an effective distance exponent from
the log-log fit
\begin{equation*}
  \log \varepsilon_a(d,n;g_0)
  =
  b_a(g_0,n)
  +
  \beta_a(g_0,n)\log d .
\end{equation*}
The same set of sampling distances and the same fitting protocol are used for every $g_0$ and $n$.

The exponent shown in Fig.~4 is extracted from the
ground-state energy-density error,
\begin{equation*}
  \beta(g_0,n)=\beta_0(g_0,n).
\end{equation*}
Representative raw data for $n=256$ are shown in
Fig.~\ref{fig:gs-error-vs-d}.
For the degree-$q=1$ stencil, the gapped prediction is
\begin{equation*}
  \varepsilon_0=O(d^4),
\end{equation*}
which generically corresponds to
\begin{equation*}
  \beta=2(q+1)=4.
\end{equation*}

The fitted quantity $\beta(g_0,n)$ is an effective exponent obtained
over a fixed finite range of sampling distances.
For every fixed finite $n$, the retained low-energy subspace remains
isolated and its spectral projector is analytic in a sufficiently small
neighborhood of $g_0$~\cite{kato1966perturbation}.
The asymptotic error therefore remains $O(d^4)$ as $d\to0$, and the
effective exponent generically approaches four.
Near $g_c$, however, the external gap decreases with increasing
system size, and the range of sampling distances over which the
asymptotic quartic behavior is observable becomes progressively
smaller.
\begin{figure}[]
  \centering
  \includegraphics[width=\textwidth]{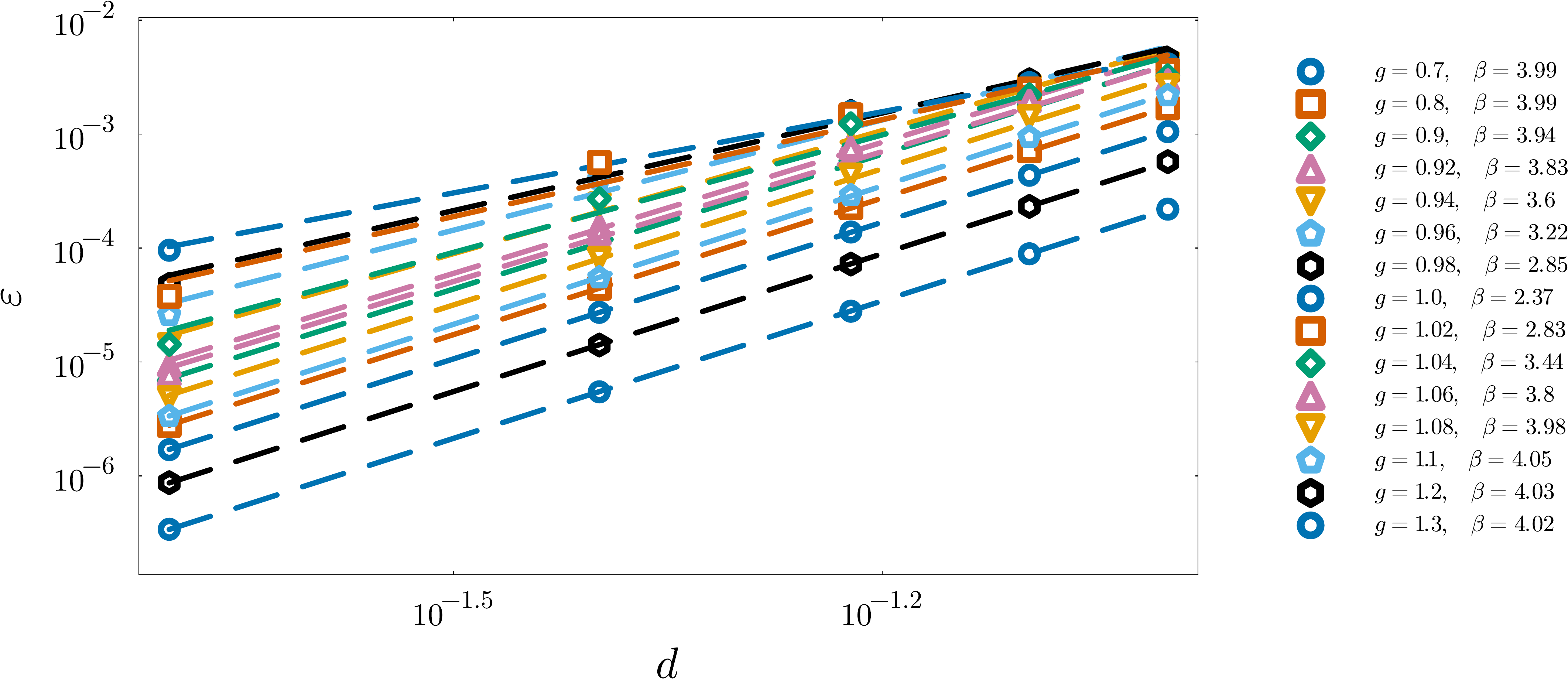}
  \caption{
    Distance dependence of the ground-state energy-density error
    $\varepsilon_0$ for the transverse-field Ising chain at different
    target fields $g$.
    The results are obtained for $n=256$ using the degree-$q=1$
    symmetric stencil and correspond to the $n=256$ data shown in
    Fig.~4.
    Symbols denote the numerical results, and all dashed lines show
    power-law fits $\varepsilon_0\propto d^\beta$ over the common
    fitting interval.
    The fitted effective exponents are listed in the legend.
    Away from the critical region, $\beta$ is close to the gapped
    prediction $\beta=4$.
    As $g$ approaches the critical point $g_c=1$, the error curves
    become progressively shallower and the fitted finite-window
    exponent decreases.
  }
  \label{fig:gs-error-vs-d}
\end{figure}
For the transverse-field Ising chain, the ground-state fidelity
susceptibility at criticality scales as $\chi_F\sim n^2$.
This scaling suggests the critical crossover variable
\begin{equation*}
  \chi_F d^2\sim (nd)^2,
\end{equation*}
in contrast to the gapped scaling variable $nd^2$.
The reduction of the finite-window exponent below four is therefore
consistent with a crossover out of the practically accessible gapped
Taylor regime.
It does not imply a change of the strict finite-size asymptotic order.

\end{document}